\colorlet{drkblue}{blue!61.8!black}
\newcolumntype{P}[1]{>{\centering\arraybackslash}p{#1}}  
\newcolumntype{M}[1]{>{\centering\arraybackslash}m{#1}}
\newcommand{\df}[1]{\textit{#1}}
\newcommand{\idf}[2]{\textit{#1}\index{#2}}
\let\R\relax
\newcommand*{\R}{\mathbb{R}} 
\newcommand*{\Z}{\mathbb{Z}}
\newcommand*{\dotprod}[2]{\langle#1,#2\rangle}
\newcommand*{\card}[1]{\lvert#1\rvert}
\newtheorem{theorem}{Theorem}
\newtheorem{lemma}[theorem]{Lemma}
\newtheorem{corollary}[theorem]{Corollary}
\newtheorem{proposition}[theorem]{Proposition}
\theoremstyle{definition}
\newtheorem{definition}[theorem]{Definition}
\newtheorem{remark}[theorem]{Remark}
\newtheorem{exercise}{}
\newtheorem{example}[theorem]{Example}
\newcommand*{\set}[1]{\{#1\}}
\newcommand*{\mr}{\mathrel}
\newcommand*{\mc}{\mathcal}
\newcommand*{\nin}{\not\in}
\newcommand*{\toto}{\rightrightarrows}
\renewcommand{\subset}{\subseteq}
\DeclareMathOperator*{\Top}{top}
\newcommand{\sminus}{\setminus}
\newcommand{\mbfC}{\mathbf{C}} 
\newcommand{\mbfD}{\mathbf{D}} 
\newcommand{\mcX}{\mathcal{X}}
\newcommand{\mbbC}{\mathbb{C}}
\newcommand{\X}{E}
\newcommand{\bY}{Y}
\newcommand{\bZ}{Z}
\newcommand{\bS}{S}
\newcommand{\bT}{T}
\newcommand{\x}{x}
\newcommand{\y}{y}
\newcommand{\rsu}{\mathrel{\rightarrowtail}}
\newcommand{\wpref}{{\mathrel{R}}}
\newcommand{\AM}{\textsf{AM}}
\newcommand{\ccf}{combinatorial choice function }
\def\@xfootnote[#1]{%
	\protected@xdef\@thefnmark{#1}%
	\@footnotemark\@footnotetext}
	\title{Choice and Market Design\thanks{Battal Do\u{g}an gratefully acknowledges financial support from the British Academy/Leverhulme Trust (SRG1819\textbackslash 190133).}
	}
	\author{
		Samson Alva\footnote{University of Texas at San Antonio. Email: samson.alva@utsa.edu.}
		\quad and \quad
		Battal Do\u{g}an\footnote{University of Bristol. Email: battal.dogan@bristol.ac.uk.}
	}
	\date{September 9, 2021\\
	\vspace{5pt}
		{\normalsize To appear in:\\ {\em Online and Matching-Based Market Design}.\\ 
		Federico Echenique, Nicole Immorlica and Vijay V. Vazirani, Editors. Cambridge University Press.
		\textcopyright \ 2021 }
	}
\begin{document}

\maketitle

\section{Introduction}

In this chapter, we explore topics inspired by two peculiarities present in a number of matching markets and market design problems.

The first one concerns foundations of choice behavior in matching markets in the tradition of revealed preference. 
Agents in such markets face \emph{combinatorial} choice problems.
The hallmark of such problems is the agent's freedom to combine elementary things into a bundle that determines the outcome.
For instance, in labor matching markets, the firms can freely choose any combination of potential employees who have applied (including choosing none of them).

The second one concerns the viability of designing a mechanism through a two-part factorization: the particulars of how the planner and the agents (or their algorithmic proxies) interact in the mechanism, and the planner's preferences and objectives.
Recent developments in object allocation and matching demonstrate that the planner's choice behavior in the mechanism can come from a rich set of possibilities without disturbing strategic incentives of agents.

The salient example is public school choice through centralized assignment, which resembles a two-sided matching market, but with an important distinction.
Students are agents both in the sense of being able to determine their own choices and in the sense of having normative welfare weight for the planner.
However, schools are typically considered a resource for consumption and so lack agency in the first sense.
From Chapter ``Two-sided markets", we know that the deferred acceptance algorithm with students as proposers yields the stable outcome most preferred by students.
We also know that the direct mechanism that determines an outcome using it is strategy-proof for students.

But if schools lack agency, how can they make choices in the algorithm?
The insight is that the planner can assume the role of agent of the schools in the operation of the deferred acceptance algorithm, making choices amongst the applicants at each step.
These choice can vary according to the planner's preferences or objectives while preserving incentive-compatibility if the choice rule is independent of preference reports and satisfies additional conditions.

So how much can these choice rules vary?
The important requirement is that rejections made in the course of the deferred acceptance algorithm are final in a certain sense.
This property holds true if the choices are made according to maximization of a fixed substitutable preference.
What is relevant for design is that a growing list of interesting choice rules not based on preference maximization have been defined that also ensure the finality of rejections property.

What is stability if schools do not have preferences?
A straightforward interpretation is that stability of an matching is the absence of a school and a student (or group of students) who prefers it to their assignment, where the school would take them given its current assignees when the choice is according to the planner's choice rule.
Unlike in truly two-sided markets, however, we cannot be assured of Pareto efficiency of a stable matching since only students have welfare relevance and the requirement of stability places constraints on the allocation.
Any changes to a stable allocation that would be chosen according to the choice rule would also reduce the welfare of some students.
If the choice rule is a complete expression of various aspects of planner objectives separate from student welfare, then stable allocations necessarily exhibit trade-offs between the collective welfare of students and every aspect of planner objectives, whereas allocations that are ``blocked'' necessarily have an aspect without a trade-off.

The upshot is that incentive-compatible market design for combinatorial or discrete goods may fruitfully be pursued through \emph{choice function design}, especially when Pareto efficiency is not paramount but instead subject to being traded-off with other planner objectives.
Our aim is to provide the foundation necessary to follow this recent design paradigm.

\medskip

Section \ref{choice:sec:modelChoice} defines the primary choice models of interest and discusses issues that pertain to modeling choice behavior, including a detailed discussion of combinatorial choice.
Section \ref{choice:sec:revPref} covers the minimum amount of revealed preference theory necessary to understand some of its implications for combinatorial choice.
Section \ref{choice:sec:combChoice} is centered on behavioral properties of combinatorial choice, while 
Section \ref{choice:sec:PIchoicefunction} describes the remarkable structure induced by the property of path independence.
In Section \ref{choice:sec:choiceRules} we study some combinatorial choice rules featured in recent market design applications. 
In Section \ref{choice:sec:choiceAndDA} formalizes the discussion above and illustrates the relationship between choice behavior, stability, and deferred acceptance.
For students seeking a deeper understanding, we offer some guidance in Section \ref{choice:sec:notes} and exercises in Section \ref{choice:sec:exercises}.

\section{Modeling Choice Behavior}
\label{choice:sec:modelChoice}

We study \emph{choice behavior} of an \emph{actor} in an \emph{environment}.
The essential description of any \emph{instance} of choice behavior of the actor in an environment consists of a \emph{choice} (or \emph{chosen alternative}), the \emph{budget set} of possible \emph{alternatives} from which the choice was made, and the prevailing \emph{conditions} in the environment.

A collection of instances of choice behavior may be factorable into a \emph{choice setting} and a \emph{choice function}, where
the choice setting defines a \emph{space of alternatives}, a \emph{domain of states}, and a \emph{budget map} identifying the budget set at each state,\footnote{
In this framework, the state encodes the environmental conditions and the budget set, which is decoded using the budget map.
No two instances could have the same state but different budget sets.
}
and
the choice function identifies the choice at each state.\footnote{
If two instances have the same state but different choices,
then choice behavior is not deterministic with respect to the state.
We can relax the requirement that the choice function identify only one alternative as chosen.
Why might there be two instances with the same state and different choices, given that the budget set is completely determined by the state?
This suggests an incompleteness in the description of the state arising from two sources, the environmental conditions or the actor's internal state.
}

A \emph{choice model} is a choice setting together with a \emph{choice function space}, which describes the class of \emph{admissible} choice behaviors.

\subsection{General Model of Choice Behavior}

A \df{general choice model} is a tuple $(\mcX, \Pi, B, \mbbC)$, where
$\mcX$ is a space of alternatives,
$\Pi$ is a domain of states,
$B$ is a \emph{budget map} taking each state $\pi$ to a nonempty budget set $B(\pi) \subset \mcX$,
and $\mbbC$ is a space of choice functions, which are maps $c$ from $\Pi$ to $2^\mcX$ such that $c(\pi)$ is a nonempty subset of $B(\pi)$.
An example of a problem that calls for a general choice model is the study of demand, where budget sets are defined as a function of prices and an expenditure limit.

When the environmental state is thought to affect choice behavior only through the budget set it defines, we can simplify to a \df{pure choice model}.
This is a tuple $(\mcX, \mc B, \mbbC)$, where
the budget set map is replaced by a budget set domain, $\mc B$, which is a collection of subsets of $\mcX$,
and
choice functions are maps $c$ from $\mc B$ to $2^\mcX$ such that $c(B)$ is a nonempty subset of $B$.

We make a few important remarks on interpretation.
\begin{remark}[Decisiveness]
We interpret choice behavior as \emph{decisive} if and only if the chosen set from a budget set is a singleton.
This is to say that alternatives are \emph{mutually exclusive} of each other.
There are a few different interpretations of non-singleton choice, which we simply call \emph{partial} choice.
Because formal axioms require interpretation, it is important to lay out the interpretation for any formal analysis.
A common interpretation is indifference of the chooser between alternatives in the chosen set.
Another interpretation is that the chosen set is an intermediate stage in a multi-stage choice procedure, with the ultimate alternative determined at a later stage.\footnote{
This interpretation might seem to contradict the mutual exclusivity of alternatives.
But this can modeled as partially observed choice, where the final chosen alternative is only known to be in the partially observed chosen set.
}
\end{remark}

The usefulness of these models for understanding positive or normative theories of choice depends upon whether there is more structure to the entities of the model, that is, the space of alternatives and the domain of states and budget sets.
The assumptions made about structure helps classify a variety of settings and models of choice.

\subsubsection{Structure on the set of alternatives}

In an \emph{abstract setting}, the space of alternatives $\mcX$ is simply a set with no structure assumed.
It is the baseline model for social choice theory, examined in Chapter ``Objectives".
Given general applicability of results for the model, it is the setting for the standard theory of rationalizability and revealed preference.

In an \emph{economic setting}, the nature of the goods to be chosen define the structure of the space of alternatives, where goods are construed broadly to include immaterial things like relationship matches or radio spectrum.
Let $K$ be the number of kinds of goods, with each kind being homogeneous and measurable.\footnote{
For our purposes, assuming homogeneity is without loss of generality, as long as the notion of indistinguishability defines an equivalence relation over goods.  Heterogeneity within a kind of good can be dealt with by an appropriate homogeneous refinement of this kind.}
Then we can model $\mcX$ as a subset of the vector space $\R^K$, such as the positive orthant $\R_+^K$, with its usual order and algebraic structure.
Each alternative is a \emph{bundle} of the $K$ kinds of goods, represented by a vector in $\mcX$.
Discreteness of kind $k$ is represented when the $k$-th component of each vector in $\mcX$ is an integer,
whereas perfect divisibility is represented by the $k$-th component varying in an interval of real numbers.\footnote{
For $a \in \R$, $x = (x_k)_{k \in K}$, and $y = (y_k)_{k \in K}$ in $\R^K$, $x \leq y$ if and only if $x_k \leq y_k$ for each $k \in K$, $x + y$ is the tuple $(x_k + y_k)_{k \in K} \in \R^K$, and $ax$ is the tuple $(ax_k)_{k \in K} \in \R^K$.
So, the space of alternatives $\mcX$ is a subset of an ordered vector space.
If $\R$ is replaced by $\Z$ throughout, then the background space for $\mcX$ is an ordered $\Z$-module.
}
For example, if all $K$ kinds of goods come in non-negative discrete quantities, we have a \emph{discrete setting} with $\mcX$ a (rectangular) subset of $\Z_+^K$.

The quintessential discrete setting is one with heterogeneous and perfectly indivisible goods.
Take, for example, a firm looking to hire workers from a labor market, where the market conditions affect the pool of workers available for hire.
A resolution of the choice problem is a cohort of hired workers.
Or consider, instead, an MBA student looking to determine courses to take in a given semester.
A resolution of the choice problem is a schedule of courses.
Here, $\mcX$ is said to have a \df{combinatorial goods}\label{choice:df:combGoodsStruct} structure, since alternatives in $\mcX$ involve combinations of the indivisible goods.
Market design and matching is replete with combinatorial choice problems.
If $\X$ is the set of indivisible goods, each alternative is a bundle that combines zero or one unit of each good, representable as a subset of $\X$.
The space of alternatives $\mcX$ is the set $2^\X$, which is the power set of $\X$ equipped with its canonical structure as a Boolean algebra.\footnote{
A collection of subsets of a given set is a Boolean algebra if arbitrary unions, as well as arbitrary intersections, of members of the collection are also members of the collection, and complements of members of the collection are also members of the collection.
In fact, $\mcX$ as defined has the structure of a power set algebra, which has a minor distinction as a special type of Boolean algebra.
By the representation theorem for Boolean algebras, a finite Boolean algebra is isomorphic to some power set algebra, but there are infinite Boolean algebras that do not have a power set algebra representation.
We shy away from the realm of the infinite in this chapter, so the distinction is immaterial.
}

\begin{remark}\label{choice:rmk:elementsAreFeatures}
There is an important distinction between an element of $\X$ and the singleton set containing this element.
Each element can be interpreted as a distinct ``feature'' or dimension of the space of alternatives, whereas the singleton set is the unique alternative that has as its only feature the element it contains.
\end{remark}
\begin{remark}
The alternative with none of the elements of $\X$ present is represented by the empty set.
In the matching market examples above, 
it is the alternative of hiring no worker or of not enrolling in school for the semester.
\end{remark}

\subsubsection{Structure on the domain of states and budget sets}

In economic settings with goods having market prices, the state of the environment comprises a price vector quoted in terms of a unit of account, called ``money'', and an endowment of money or of goods, and a demand function describes the choice behavior of the agent.
This is a setting with \df{price-based budget sets}.\footnote{
A price-based budget set is defined by an inequality constraint that is linear in the prices ($\dotprod{p}{\x} \leq b$), with the bound $b$ a given expenditure limit (money endowment) or equal to the market value $\dotprod{p}{\x_0}$ of a given endowment of goods $\x_0$.}
\label{choice:df:priceBasedBudgets}

In some cases, money is a currency that has valuable use outside of the context of the problem domain.
Examples include the auction environments used to sell bands of electromagnetic spectrum (see Chapter ``Spectrum Auctions") or advertisement slots on a webpage (see Chapter ``Online Matching in Advertisement Auctions").
However, in some other cases, money is solely a construct of a mechanism or setting, with no valuable use outside of this context.
Examples of this include the allocation of courses in some business schools through pseudomarket mechanisms (see Chapter ``Pseudomarkets").
So, money with outside value is a good that directly affects the agent's utility, whereas money without outside value is artificial or \emph{token} money that only indirectly affects the agent's utility through its use as a unit of account for prices and expenditure limits in the budget constraint.

In an abstract setting of pure choice, the domain of states is identified with a domain of budget sets.
Budget domains may yet have structure in a neutral setting, through conditions on cardinality.
The budget domain is \df{complete}\label{choice:df:completeBudgetDomain} if it contains every finite subset of alternatives.
It is \df{additive}\label{choice:df:additiveBudgetDomain} if the union of two budget sets in the domain is also in the domain.
It is a \df{connected}\label{choice:df:connectedBudgetDomain} domain if for every trio of alternatives (possibly indistinct) there is a budget set in the domain that contains exactly these alternatives.

With a structure on the space of alternatives, natural structures on budget sets and domains emerge.
For example, in economic settings with goods in measurable quantities, \emph{free disposal} allows an actor to reduce the quantity of one or more kinds of goods in a bundle without cost.
In matching settings, where a good represents a particular relationship match, \emph{voluntary participation} allows an actor to drop one or more relationships without dropping others.
Since for these settings we represent alternatives as vectors in an ordered vector space with partial order $\leq$, free disposal or voluntary participation maps to the requirement that a budget set is \emph{downward closed}, which means that it contains all vectors that are lesser in the partial order than some vector in it.\footnote{In formal notation, budget set $B$ is said to be downward closed if for every $x, y \in \mcX$, if $x \in B$ and $\y \leq x$, then $y \in B$.}
We say the budget domain is \df{comprehensive}\label{choice:df:comprehensiveBudgetDomain} if every budget set in it is downward closed.
For example, the price-based budgets domain is comprehensive.

In many matching settings, in addition to budgets being downward closed, actors possess another liberty in constructing a bundle, that of unrestricted combination.
Suppose a firm has applicant pools $\bY$ and $\bY'$ from two different recruiting channels.
What are the bundles (that is, teams of workers) the firm could consider hiring?
Voluntary participation permits any team $\bZ$, drawn from one of the two pools.
Unrestricted combination means the firm could form a team from any combination of the applicants in pool $\bY$ and in pool $\bY'$.
Since bundles are subsets of $\X$, combinations are defined by unions.

\begin{definition}
A budget domain is a \idf{combinatorial choice domain}{combinatorial choice domain}\label{ch:choice:combChoiceDomain} if
\begin{inparaenum}[(1)]
\item the domain is \df{comprehensive},
\item every budget set $B$ in the domain is \emph{join closed}, that is, for every pair of bundles $x, y \in B$, $x \vee y \in B$,
\item every bundle $\bY$ is \emph{potentially budget constrained}, that is, $\bY$ is $\subset$-maximal for some budget in the domain.
\end{inparaenum}
\end{definition}
If a bundle $\bY$ is potentially budget constrained, there is a price vector $p$ and expenditure limit $b$ so that $\bY$ is inclusion maximal in a price-based budget set.

\subsection{Combinatorial Models of Choice Behavior}
\label{choice:sec:combModel}
As our overview has made clear, the combinatorial setting brings a lot of structure with it.
We define two models of choice that directly incorporate this structure.

A \idf{combinatorial choice model}{combinatorial choice model} is a tuple $(\X, \mc D, \mbfC)$, where $\X$ is a finite set of \df{elements}, $\mc D \subset 2^\X$ is a nonempty domain of \emph{option sets}, and $\mbfC$ is a set of \idf{combinatorial choice functions}{combinatorial choice function}, which are functions $C: \mc D \to 2^\X$ such that $C(\bY) \subset \bY$ for each option set $\bY$.
We will generally assume that the domain of option sets is \emph{complete}, that is $\mc D = 2^\X$, in which case we drop its notation in the tuple.

\begin{remark}
Each $C \in \mbfC$ models \emph{decisive} choice behavior, even when, for some option set $\bY$, $C(\bY)$ is a set with more than one element.
To reiterate Remark \ref{choice:rmk:elementsAreFeatures}, elements are features or dimensions of the space of alternatives, and not themselves alternatives.
Instead, an alternative or bundle is a set of elements, hence a subset of $\X$.
To allow for partial choice, choice correspondences from $\mc D$ to $2^\X$ are needed.
We do not pursue this more general approach.

\end{remark}
\begin{remark}
An option set is not a budget set, because it is not the set of all bundles available.
Instead, it comprises the elements that may be combined into bundles.
However, when the domain of budgets is infinite, as in the next model we discuss, correspondences are largely unavoidable when some sort of continuity is desirable.
\end{remark}

A \idf{combinatorial demand model}{combinatorial demand model} is a tuple $(\X, \Omega, B, \mbfD)$, where $\X$ is a finite set of elements, $\Omega$ is a nonempty set, $B$ is the budget map $B: \R_{++}^\X \times \Omega \toto 2^\X$, and $\mbfD$ is the set of \idf{combinatorial demand correspondences}{combinatorial demand correspondence}, which are correspondences $D: \R_{++}^\X \times \Omega \toto 2^\X$ such that $D(p, \omega) \subset B(p, \omega)$ for each $(p, \omega) \in \R_{++}^\X \times \Omega$.

As with the combinatorial choice model above, an alternative is a bundle of goods that is defined as a subset of elements from $\X$.
Identify each bundle $\bZ \in 2^\X$ with the vector in $\R^\X$, also denoted by $\bZ$, that has an entry of $1$ at index $a \in \X$ if $a \in \bZ$ and an entry of $0$ otherwise.
We are most interested in settings with a money endowment and linear prices, and so take $\Omega$ to be a subset of $\R_+$ and $B(p, \omega) = \set{\bZ \in 2^\X: \dotprod{p}{\bZ} \leq \omega}$.
It should be clear that this is simply a general choice model $(\mcX, \Pi, B, \mbfC)$ with combinatorial goods (see p.\pageref{choice:df:combGoodsStruct}) and price-based budget sets (see p.\pageref{choice:df:priceBasedBudgets}), where $\mcX = 2^\X$ and $\Pi$ has been factored in to a price vector space $\R_{++}^\X$ and an endowment space $\Omega$.

\subsection{Faithful Representations of Combinatorial Choice Models}\label{ch:choice:faithfulMap}
It is natural to ask what is the relationship between combinatorial choice models and the pure choice models previously described.
As we now describe, each combinatorial choice model is \emph{behaviorally isomorphic} to some pure choice model with combinatorial goods and budget sets.
This means that each combinatorial choice model is a \emph{faithful} representation of a pure choice model.

Given a combinatorial choice model $(\X, \mc D, \mbfC)$,
define $\mcX = 2^\X$, 
$\mc B = \set{2^\bY: \bY \in \mc D}$, 
and $\mbbC$ to be the set of all $c:\mc B \to 2^\mcX$ such that $c(2^\bY) = \set{C(\bY)}$ for every $\bY \in \mc D$.
Then $(\mcX, \mc B, \mbbC)$ is a pure choice model where
the space of alternatives $\mcX$ is structured as a powerset algebra, 
the budget domain $\mc B$ is both comprehensive and join-closed,
and
the choice functions in $\mbbC$ are decisive.
Moreover, if the domain of option sets is complete ($\mc D = 2^\X$), then $\mc B$ is a combinatorial choice domain.
The mapping from combinatorial choice models to pure choice models so defined is denoted $\mathfrak F$.

Take a pure choice model $(\mcX, \mc B, \mbbC)$ where
all choice functions in $\mbbC$ are decisive,
the space of alternatives is a finite Boolean lattice, $(\mcX, \leq, \vee, \wedge)$,
and $\mc B$ is a collection of subsets of $\mcX$ that is comprehensive with respect to $\leq$ and join-closed with respect to $\vee$.
Denote the mapping defined below from pure choice models with this structure to combinatorial choice models by $\mathfrak G$.
Let $\X$ be the set of \emph{atoms} of the Boolean lattice $\mcX$.
Atoms are simply those members of the lattice with only the bottom of the lattice below them.
These correspond naturally to the set of elements in a combinatorial model, because every member of the lattice is the join of the set of atoms below it.
By the representation theorem for Boolean lattices, there is an isomorphism from $\mcX$ to the powerset algebra $2^\X$ of atoms, with the usual set operators and inclusion serving the lattice operators and order.\footnote{
Representability of $\mcX$ as a powerset algebra is characterized by $\mcX$ being a complete and atomic Boolean lattice, which are conditions automatically satisfied if $\mcX$ is finite.
The isomorphism between combinatorial and pure choice models can be extended to infinite spaces if we strengthen join-closed to complete-join-closed.
}
That is, $\mathfrak G$ maps each $\x \in \mcX$ to $\set{a \in \X: a \leq \x}$.
Each budget set $B \in \mc B$ is mapped to an option set $\mathfrak G(B) = \bigcup_{\y \in \mc B} \mathfrak G(\y)$.
Since $B$ is comprehensive and join-closed, $\wedge B$ is the bundle $\x \in B$ that the greatest in terms of the underlying partial order on alternatives $\leq$.
Then $\mathfrak G(\x)$, the set of atoms below $\x$, is equal to $\mathfrak G(B)$.
So, define the domain of option sets $\mc D = \set{\mathfrak G(B) : B \in \mc B}$, which can be seen as a collection of subsets of $\X$.
Moreover, if $\mc B$ is a combinatorial choice domain, then the domain of option sets is complete.
Finally, each choice function $c \in \mbbC$ is mapped to a function $\mathfrak G(c) = C: \mc D \to 2^\X$ such that $C(\mathfrak G(B)) = \mathfrak G(c(B))$.
It should be clear that each type of object is mapped bijectively from one model to the other.

\section{Revealed Preference and Choice Behavior}
\label{choice:sec:revPref}

A \emph{theory of choice} describes how choice behavior is determined.
It will posit the \emph{existence} of theoretical entities (e.g. preferences, priorities, information) and rules or laws of how these entities produce choice behavior.
When applied to a particular choice setting, it generates a model of choice with that setting.
If the generated choice model describes the choice behavior being studied, we say the behavior (or model) is \emph{rationalized} by the theory.

In this section we study rationalizability by the theory of preference-based (utility-based) choice, discussed in Chapter ``Objectives". 
Recall that the theory postulates that the agent has a preference relation (utility function) and as a rule chooses any one of the most preferred (utility-maximizing) alternatives from a given budget set.

Fix a pure choice model $(\mcX, \Pi, B, \mbbC)$.

We model preferences by a binary relation $\wpref$ on $\mcX$, where $x \wpref y$ denotes ``$x$ is at least as preferred as $y$''.
Let $\overline{\mc R}$ be the set of all preference relations on $\mcX$.
Unlike the definition in Chapter ``Objectives'', we allow for intransitivity and incompleteness.
The transitive closure of a binary relation $R$, denoted $\tau(R)$, 
is the inclusion-smallest transitive relation that contains $R$.
It means that $x \mr \tau(R) y$ if and only if there exists a finite sequence $x_0, \dots, x_n$ such that $x_0 = x$, $x_n \mr R y$ and for every $m = 0, \dots, n - 1$, $x_m \mr R x_{m+1}$.\footnote{
The transitive closure of $R$ is equivalently defined by $\tau(R) = \bigcap_{R' \in \mc R^\tau} R'$, where $\mc R^\tau$ is the set of all transitive relations $R'$ such that $R' \supseteq R$.
}
The set of $\wpref$-\emph{greatest} alternatives in a subset $B$ of $\mcX$, denoted $\Top(B, \wpref)$, is equal to $\set{x \in B: \forall y \in B, x \wpref y}$.
The \df{preference-maximization choice rule} $\Gamma$ determines choice to be the set of preference-greatest alternatives in the budget set.
It maps each preference relation $\wpref$ to a choice function $\Gamma^\wpref$, defined by $\Gamma^\wpref(B(\pi)) = \Top(B(\pi), \wpref)$ for every $\pi \in \Pi$.
We say $\Gamma^\wpref$ is the choice function induced by $\wpref$.

A choice function $c \in \mbbC$ is (transitively) \df{rationalized} by a (transitive) preference relation $\wpref$, which is called a (transitive) \emph{rationalization} of $c$, if $c$ is induced by $\wpref$ under the preference-maximization choice rule, that is, if $c = \Gamma^\wpref$.

\subsection{Rationalizability and Revealed Preference}

Questions for any theory intended as a positive analysis of observed choice include the following:
Is the theory \emph{falsifiable}, that is, are there observable choice patterns not consistent with the theory?
Is the theory \emph{testable}, that is, given data on choice behavior, is there an effective test based solely on the data that will correctly falsify the theory or correctly provide a rationalization?
We address both these questions through an analysis of revealed preference.

We say that alternative $x$ is \idf{revealed preferred}{revealed preference} to alternative $y$, denoted $x \mr R_{c} y$,
if there exists a problem $\pi \in \Pi$ such that $x$ is chosen and $y$ is in the budget set (i.e. $x \in c(\pi)$ and $y \in B(\pi)$).
If for some problem $x$ is chosen and $y$ is in the budget set but not chosen, we say that $x$ is \idf{revealed strictly preferred}{revealed strict preference} to $y$, denoted $x \mr R^s_{c} y$.
It is important to note that $R^s_c$ may not be the asymmetric part of $R_c$.
For example, there could be one budget set at which $x$ and $y$ are chosen (so $x \mr R_c y \mr R_c x$) and another budget set at which $x$ and $y$ are available but only $x$ is chosen (so $x \mr R^s_c y$).

We begin with a result that explains the central place of the revealed preference relation in the analysis of rational choice.
From the point of view of testability, it implies that a potentially exhaustive search for a rationalization in the preference space is not required.
\begin{proposition}\label{prop:revPrefRat}
A choice function $c$ is rationalizable if and only if it is rationalized by its revealed preference relation $R_c$.
\end{proposition}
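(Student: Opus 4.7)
The plan is to dispatch the ``if'' direction immediately and focus attention on the ``only if'' direction. The ``if'' direction is purely definitional: if $c = \Gamma^{R_c}$, then $R_c$ itself serves as a rationalization of $c$, so $c$ is rationalizable.

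For the ``only if'' direction, I would assume $c$ is rationalizable, say $c = \Gamma^\wpref$ for some preference relation $\wpref$, and aim to show $c(\pi) = \Top(B(\pi), R_c)$ for every $\pi \in \Pi$. The crucial preliminary step is the observation that \emph{every rationalization of $c$ contains $R_c$}: if $x \mr R_c y$, then by definition there is some $\pi'$ with $x \in c(\pi') = \Top(B(\pi'), \wpref)$ and $y \in B(\pi')$, which forces $x \wpref y$. Hence $R_c \subseteq \wpref$.

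Armed with this, the two required inclusions fall out quickly. For $c(\pi) \subseteq \Top(B(\pi), R_c)$: if $x \in c(\pi)$, then $x \in B(\pi)$ and $x \mr R_c y$ for every $y \in B(\pi)$ directly from the definition of the revealed preference. For $\Top(B(\pi), R_c) \subseteq c(\pi)$: if $x \in \Top(B(\pi), R_c)$, then $x \mr R_c y$ for every $y \in B(\pi)$, and the preliminary observation upgrades each of these to $x \wpref y$, so $x \in \Top(B(\pi), \wpref) = c(\pi)$.

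I do not expect a serious obstacle here; the argument is essentially definition-chasing. The conceptual content worth emphasizing is that $R_c$ plays a canonical role because it is the \emph{smallest} relation any rationalization must contain, so whenever a rationalization exists, $R_c$ itself is already one. Notably, neither transitivity nor completeness of $\wpref$ is invoked anywhere, which matches the paper's explicit decision to allow intransitive and incomplete preferences.
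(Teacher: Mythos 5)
Your proof is correct and follows essentially the same route as the paper's: establish $c(\pi) \subseteq \Top(B(\pi), R_c)$ directly from the definition of revealed preference, observe that any rationalization $\wpref$ must contain $R_c$, and use that containment to get $\Top(B(\pi), R_c) \subseteq \Top(B(\pi), \wpref) = c(\pi)$. Your closing remark that $R_c$ is the smallest relation every rationalization must contain is exactly the conceptual point the paper emphasizes just before the proposition.
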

\begin{proof}
The ``if'' direction is immediate.
To prove the other direction, notice that $c(\pi) \subset \Top(B(\pi), R_c)$ follows directly from the definition of revealed preference.
Moreover, any rationalization must extend the revealed preference relation.
That is, for every pair of alternatives $x$ and $y$ and every rationalization $\wpref$ of $c$, $x \mr R_c y$ implies $x \wpref y$.
Since $x$ revealed preferred to $y$ implies there is a $\pi \in \Pi$ such that $x \in c(\pi)$ and $y \in B(\pi)$, the definition of rationalization implies $c(\pi) = \Top(B(\pi), \wpref)$, and so $x \wpref y$.
Finally, the set of greatest alternatives expands when the preference relation is extended, that is, $\Top(B(\pi), R_c) \subset \Top(B(\pi), \wpref)$.
\end{proof}

\subsection{WARP and Rationalizability} 

A choice function $c$ satisfies the \idf{weak axiom of revealed preference (WARP)}{weak axiom of revealed preference} if for every pair of alternatives $x$ and $y$, if $x$ is revealed preferred to $y$, then $y$ is not revealed strictly preferred to $x$.

The weak axiom of revealed preference is the foundational axiom of revealed preference theory.
We begin exploration of its implications in the pure choice setting.
\begin{theorem}\label{thm:WARPimplies}
If $c$ is a choice function that satisfies WARP, then 
\begin{enumerate}
	\item it is rationalizable.
	\item it is transitively rationalizable if the domain of budget sets $B(\Pi)$ is connected or additive.
\end{enumerate}
\end{theorem}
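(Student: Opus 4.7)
The plan is to use Proposition~\ref{prop:revPrefRat}, which reduces rationalizability to checking the single identity $c = \Gamma^{R_c}$, and then to strengthen to transitive rationalizability by showing that $R_c$ itself is transitive under either hypothesis.

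For part (1), the inclusion $c(\pi) \subseteq \Top(B(\pi), R_c)$ is immediate from the definition of $R_c$. For the reverse inclusion, I will suppose $x \in \Top(B(\pi), R_c) \setminus c(\pi)$ and pick any $y \in c(\pi)$. Then $y \mr R^s_c x$, since $y$ is chosen at $\pi$ while $x \in B(\pi) \setminus c(\pi)$, whereas $x \mr R_c y$ by the $\Top$ condition. This pair violates WARP, so the inclusion holds and Proposition~\ref{prop:revPrefRat} delivers rationalizability.

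For part (2), by part (1) it suffices to show that $R_c$ is transitive. Assume $x \mr R_c y$ and $y \mr R_c z$, and fix witnesses $\pi_1, \pi_2 \in \Pi$ with $x \in c(\pi_1)$, $y \in B(\pi_1)$, $y \in c(\pi_2)$, $z \in B(\pi_2)$. In the \emph{connected} case, choose $\pi^* \in \Pi$ with $B(\pi^*) = \{x, y, z\}$. If $x \notin c(\pi^*)$, there are two subcases: if $y \in c(\pi^*)$, then $y \mr R^s_c x$, which together with $x \mr R_c y$ contradicts WARP; if $y \notin c(\pi^*)$, then $c(\pi^*) = \{z\}$, so $z \mr R^s_c y$, which together with $y \mr R_c z$ contradicts WARP. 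Hence $x \in c(\pi^*)$, and $x \mr R_c z$.

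In the \emph{additive} case, take $\pi^*$ with $B(\pi^*) = B(\pi_1) \cup B(\pi_2)$. The key step is a lifting principle: whenever $c(\pi^*)$ meets $B(\pi_i)$ at some $w$ and $v \in c(\pi_i)$, then $v \in c(\pi^*)$; indeed, $v \mr R_c w$ inside $B(\pi_i)$, and $v \notin c(\pi^*)$ would give $w \mr R^s_c v$, violating WARP. Since $c(\pi^*)$ is nonempty, it meets $B(\pi_1)$ or $B(\pi_2)$. If it meets $B(\pi_1)$, applying the lifting principle with $v = x$ puts $x$ into $c(\pi^*)$. If it meets only $B(\pi_2)$, applying the lifting with $v = y$ puts $y$ into $c(\pi^*)$, and then a second WARP application—if $x \notin c(\pi^*)$ then $y \mr R^s_c x$ contradicts $x \mr R_c y$—forces $x \in c(\pi^*)$ as well. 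Either way, $z \in B(\pi^*)$ yields $x \mr R_c z$, completing transitivity. The main obstacle is the additive case: one must identify the right lifting lemma and chain two WARP applications in the correct order, whereas the connected case is dispatched by a single case split on $c(\{x, y, z\})$.
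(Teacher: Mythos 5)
Your proof is correct and, for part (1) and the connected case of part (2), follows essentially the same route as the paper: reduce to showing $\Top(B(\pi), R_c) \subseteq c(\pi)$ via Proposition~\ref{prop:revPrefRat} and a direct WARP contradiction, then establish transitivity of $R_c$ by a case split on the choice from $\{x,y,z\}$. The one place you go beyond the paper is the additive case, which the paper defers to Exercise~\ref{ex:WARPimplies}; your ``lifting'' lemma (a chosen element of $B(\pi^*)$ lying in $B(\pi_i)$ forces all of $c(\pi_i)$ into $c(\pi^*)$, on pain of a WARP violation) is sound, and the two-step chain through $y$ in the case where $c(\pi^*)$ initially meets only $B(\pi_2)$ correctly closes the argument.
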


\begin{proof}
From the proof of Proposition \ref{prop:revPrefRat}, we simply need to show that WARP implies, for every $\pi \in \Pi$, $\Top(B(\pi), R_c) \subset c(\pi)$.
The reader should verify that a choice function satisfies WARP if and only if the revealed strict preference relation is equal to the asymmetric component of the revealed preference relation.

As for rationalizability by a transitive preference relation, suppose first that the domain of budget sets is connected.
Let $x \mr R_c y $ and $y \mr R_c z$.
Take the set $\set{x, y, z}$, which connectedness assures us is a budget set in the domain.
By WARP, we see that $z$ chosen implies $y$ chosen and in turn so $x$ chosen.
Indeed, this argument symmetrically applies to all members of $\set{x, y, z}$. 
Then, assuming choice is non-empty, all three are chosen, implying $x \mr R_c z$ and so transitivity of the revealed preference relation.

The case with an additive domain of budgets is left as Exercise \ref{ex:WARPimplies}.
\end{proof}

\smallskip

We can understand the implication of WARP for combinatorial choice models by making use of the isomorphisms mapping between combinatorial and pure choice.

\begin{theorem}\label{thm:WARPequivIRE}
Suppose a combinatorial choice model $(\X, 2^\X, \mbfC)$ and a pure choice model $(\mcX, \mc B, \mbbC)$ are isomorphic, i.e. $(\X, 2^\X, \mbfC)$ and $(\mcX, \mc B, \mbbC)$ are isomorphically mapped from the first to the second by $\mathfrak F$ and second to the first by its inverse $\mathfrak G$.
Then $C \in \mbfC$ satisfies IRE\footnote{See p.\pageref{ch:choice:def:IRE} for the definition.} if and only if $c = \mathfrak F(C)$ satisfies WARP.
\end{theorem}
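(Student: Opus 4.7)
The plan is to translate WARP for $c$ and IRE for $C$ into parallel statements via the isomorphism $\mathfrak F$, then verify each direction by a short direct argument. Under $\mathfrak F$, each budget set takes the form $2^{\bY}$ for some $\bY \in 2^{\X}$ and $c(2^{\bY}) = \{C(\bY)\}$, so for bundles $x, y \in 2^{\X}$ we have $x \mr R_c y$ iff there exists $\bY$ with $x = C(\bY)$ and $y \subset \bY$; because $c$ is decisive and singleton-valued on each $2^\bY$, $x \mr R^s_c y$ iff additionally $x \neq y$. I will take IRE in its standard form: $C(\bY) \subset \bY' \subset \bY$ implies $C(\bY') = C(\bY)$.

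For the forward direction (IRE implies WARP), I will argue by contradiction. A WARP violation furnishes distinct bundles $x \neq y$ together with option sets $\bY, \bY'$ satisfying $x = C(\bY)$, $y \subset \bY$, $y = C(\bY')$, and $x \subset \bY'$. The key move is to consider the auxiliary option set $\bZ = x \cup y$. Since $x \subset \bY$ (as $x = C(\bY) \subset \bY$) and $y \subset \bY$, we have $\bZ \subset \bY$ with $C(\bY) = x \subset \bZ$; IRE yields $C(\bZ) = x$. Symmetrically, $\bZ \subset \bY'$ with $C(\bY') = y \subset \bZ$ forces $C(\bZ) = y$, contradicting $x \neq y$.

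For the converse (WARP implies IRE), I will take $\bY' \subset \bY$ with $C(\bY) \subset \bY'$ and set $x = C(\bY)$, $y = C(\bY')$. The inclusion $y \subset \bY' \subset \bY$ together with $x = C(\bY)$ witnesses $x \mr R_c y$; the inclusion $x \subset \bY'$ together with $y = C(\bY')$ witnesses $y \mr R_c x$. If $x \neq y$, the latter strengthens to $y \mr R^s_c x$, contradicting WARP; hence $x = y$, which is the IRE conclusion.

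The only real subtlety is in the forward direction, where the auxiliary bundle $\bZ = x \cup y$ must be available as a single option set to which IRE can be applied twice. Completeness of the option domain ($\mc D = 2^{\X}$) together with the combinatorial (join-closed, comprehensive) structure of $\mc B$ carried by $\mathfrak F$ guarantees this, which is precisely why the statement is set up for the complete-domain isomorphism; the rest of the argument is essentially bookkeeping on the revealed-preference relation.
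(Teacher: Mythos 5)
Your proof is correct. The paper itself does not prove Theorem \ref{thm:WARPequivIRE} --- it defers to the cited reference \cite{Alva:2018jet} --- so there is no in-text argument to compare against; your translation of $R_c$ and $R^s_c$ through $\mathfrak F$ is accurate (decisiveness makes $x \mr R^s_c y$ exactly $x \mr R_c y$ plus $x \neq y$), the auxiliary set $\bZ = x \cup y$ squeezed between $C(\bY)$ and both $\bY$ and $\bY'$ is the right device for the forward direction, and the completeness of the option domain that you flag is indeed the only point where the hypotheses are genuinely used.
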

The behavioral implications for a combinatorial choice model of various conditions formalized in the pure choice model can be obtained through its faithful representation.
See Section \ref{choice:sec:notes} for a reference.

\smallskip
We turn next to the combinatorial demand model.
The utility maximization theory posits that the agent has a utility function $u$ and his demand $D^u(p, m)$ is derived as the set of choices that maximize $u$ subject to constraints placed by the prices $p$ and money-budget $m$.
The appropriate analysis depends on whether money provides direct utility.
We assume so.
The agent gets utility $u(A, t)$ from consuming a pair of a bundle of items and a quantity of money $(A, t)$, where $u$ is strictly monotone in money.

Faced with $(p, m) \in \Pi$, where $m$ could be interpreted as an endowment of money, the agent chooses $(A, t)$ to maximize $u(A, t)$ subject to the budget constraint $\dotprod{p}{A} + t \leq m$.
Notice that money is the numeraire since its price equals 1.
The derived demand for bundles of items $D^u(p, m)$ consists of each bundle $A \in \mcX$ that for some $t \in \R$ maximizes $u$ at $(p, m)$.
This is because we can identify $t$ from $A \in D^u(p, m)$ by $t = m - \dotprod{p}{A}$, given monotonicity in money.

Consider utility that is quasilinear in money, as introduced in Chapter ``Objectives", so that $u(A, t) = v(A) + t$ for some $v:2^\X \to \R$ called the valuation function.
As long as any lower bound on money consumption is not binding, quasilinearity simplifies demand analysis by eliminating income effects.
That is, for any two $(p, m)$ and $(p, m')$, $D^u(p, m) = D^u(p, m')$.

Given this irrelevance of the money-budget on the demand for items, we shall not indicate its level.
Can a given demand correspondence be explained by maximization of some objective function that is quasilinear in prices?

A positive answer to the question requires some intuitive restrictions on $D$.
First, we say $D$ satisfies \df{law of demand} if for every $p, p'$ and every $A \in D(p)$ and every $A' \in D(p')$,
$\dotprod{p - p'}{A - A'} \leq 0$.
To understand the meaning of this requirement, notice that it implies a version of the \emph{Weak Axiom of Revealed Preference} adapted to this setting.
WARP states that for any bundles $A$ and $A'$ and prices $p$ and $p'$, if bundles $A$ and $A'$ are demanded at prices $p$ and $p'$, respectively, and furthermore bundle $A'$ is worth less at prices $p$, then bundle $A$ is more expensive than $A'$ at price $p'$.
That is, if $A \in D(p)$, $A' \in D(p')$ and $\dotprod{p}{A'} < \dotprod{p}{A}$, then $\dotprod{p'}{A} > \dotprod{p'}{A'}$.

We also need a continuity property.
Demand function $D$ is \df{upper hemicontinuous} if for every $p \in \R^\X_{++}$, there exists an open neighborhood $V$ of $p$ such that $D(q) \subset D(p)$ for every $q \in V$.

\begin{proposition}
For any quasilinear utility function $u$, the derived demand function $D^u$ on domain $\Pi$ satisfies the Weak Axiom of Revealed Preference and upper hemicontinuity.
\end{proposition}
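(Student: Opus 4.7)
The plan is to first reduce the demand problem to a simpler maximization over bundles alone, and then handle each property via direct arguments exploiting quasilinearity and the finiteness of $2^\X$.

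First, I would use the budget-binding observation already made in the text: because $u$ is strictly monotone in money, at any optimum the constraint binds, so $t = m - \dotprod{p}{A}$. Substituting into $u(A,t) = v(A) + t$ reduces the problem to maximizing $v(A) - \dotprod{p}{A}$ over $A \in 2^\X$. Thus $D^u(p) = \argmax_{A \in 2^\X} \bigl(v(A) - \dotprod{p}{A}\bigr)$, and in particular the demand depends only on $p$, as already noted.

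Next I would prove WARP. Suppose $A \in D^u(p)$, $A' \in D^u(p')$, and $\dotprod{p}{A'} < \dotprod{p}{A}$. Optimality of $A$ at $p$ gives
\[
v(A) - \dotprod{p}{A} \;\geq\; v(A') - \dotprod{p}{A'},
\]
so $v(A) - v(A') \geq \dotprod{p}{A} - \dotprod{p}{A'} > 0$. Optimality of $A'$ at $p'$ gives
\[
v(A') - \dotprod{p'}{A'} \;\geq\; v(A) - \dotprod{p'}{A},
\]
i.e.\ $\dotprod{p'}{A} - \dotprod{p'}{A'} \geq v(A) - v(A') > 0$, which is the desired strict inequality $\dotprod{p'}{A} > \dotprod{p'}{A'}$.

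Finally, for upper hemicontinuity, I would exploit that $\X$ is finite, so $2^\X$ is finite. Fix $p \in \R^\X_{++}$ and let $M(p) = \max_{A \in 2^\X}\bigl(v(A) - \dotprod{p}{A}\bigr)$. For each $A \notin D^u(p)$, the gap $\delta_A(p) := M(p) - \bigl(v(A) - \dotprod{p}{A}\bigr)$ is strictly positive; since $q \mapsto v(A) - \dotprod{q}{A}$ is continuous in $q$ (a finite linear expression), there is an open neighborhood $V_A$ of $p$ on which this gap remains strictly positive, so $A \notin D^u(q)$ for $q \in V_A$. Taking $V = \bigcap_{A \notin D^u(p)} V_A$, a finite intersection of open sets containing $p$, we get an open neighborhood of $p$ such that $D^u(q) \subset D^u(p)$ for all $q \in V$.

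Neither step presents a substantive obstacle. The only subtlety is recognizing the reduction to $v(A) - \dotprod{p}{A}$ via the binding budget constraint under quasilinearity; after that, WARP is the textbook two-inequality argument and upper hemicontinuity reduces to a standard finite maximum-stability argument.
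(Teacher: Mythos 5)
Your proof is correct and follows essentially the same route as the paper's: the reduction to maximizing $v(A) - \dotprod{p}{A}$ via the binding budget constraint under strict monotonicity in money, the standard two-inequality argument for WARP, and upper hemicontinuity via a finite intersection of open neighborhoods on which each non-demanded bundle remains suboptimal. If anything, your phrasing of the continuity step in terms of the gap to the maximum value is marginally more careful than the paper's comparison of $U(A,q)$ against the fixed level $W$ attained at $p$ (which tacitly also uses that the maximum value varies continuously), but the substance is the same.
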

\begin{proof}
To prove WARP, let $p \in \R_{++}^\X$, $A \in D^u(p)$, $A' \in D^u(p')$ and $\dotprod{p}{A'} < \dotprod{p}{A}$.
From utility maximization, $v(A) - \dotprod{p}{A} \geq v(A') - \dotprod{p}{A'}$ and
 $v(A') - \dotprod{p'}{A'} \geq v(A) - \dotprod{p'}{A}$.
Then these two inequalities yield 
$\dotprod{p'}{A} - \dotprod{p'}{A'} \geq v(A) - v(A') \geq \dotprod{p}{A} - \dotprod{p}{A'}$.
But then $\dotprod{p}{A'} < \dotprod{p}{A}$ implies $\dotprod{p'}{A} - \dotprod{p'}{A'} \geq 0$ as desired.

Define $U(A, p) = v(A) - \dotprod{p}{A}$ for each bundle $A$ and price vector $p$.
To prove upper hemicontinuity, 
notice that for each bundle $A$, $U$ is continuous in prices.
Fix a price vector $p$.
Let $W$ be the maximum utility attained at $p$, i.e. $W = U(A,p)$ for some maximizing bundle $A \in D^u(p)$.
Define $V_A$ to be the preimage under $U(A, \cdot)$ of the open set $(\infty, W)$.
Let $V = \cap_{A \in 2^\X \sminus D^u(p)} V_A$.
Note that for every bundle $A$, $A \nin D^u(p)$ if and only if $U(A, p) < W$.
Thus, $A \nin D^u(p)$ implies $p \in V_A$, and so $p \in V$.
Continuity of a function means the preimage of an open set is open, so $V_A$ is an open set.
Then $V$ is open, since it is a finite intersection of open sets, given that $2^\X$ is finite.
Finally, for any $A \in 2^\X \sminus D^u(p)$, $V \subset V_A$, so for any $q \in V$,  $U(A, q) \in (\infty, W)$.
Thus, $A \nin D^u(q)$, establishing upper hemicontinuity.
\end{proof}

A converse also holds.
\begin{theorem}\label{thm:ratCombDemand}
(Rationalizability) For any combinatorial demand $D$ on domain $\Pi$ that satisfies the law of demand and upper hemicontinuity, there exists a quasilinear utility $u$ that rationalizes it, that is, the derived demand function $D^u$ equals $D$.
\end{theorem}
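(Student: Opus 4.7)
The plan is to construct a valuation $v: 2^\X \to \R$ such that the derived demand of the quasilinear utility $u(A,t) = v(A) + t$ equals $D$. Fix a reference price $p^*$ and a bundle $A^* \in D(p^*)$, normalizing $v(A^*) = 0$. The rationalization constraints force the following: whenever two bundles $A, A'$ are co-demanded at a single price $p$, we must have $v(A) - v(A') = \langle p, A - A'\rangle$. So for any bundle $A$ demanded at some price, I define
\[
v(A) := \sum_{i=1}^n \langle p_i, A_i - A_{i-1}\rangle
\]
along any finite chain of pairs $(p_i, A_{i-1}, A_i)$ with $A_{i-1}, A_i \in D(p_i)$ for each $i$ and $A^* = A_0, A_n = A$. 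For bundles never in $D(p)$ for any $p$, I would set $v(A)$ sufficiently negative so that $A$ is never a $v$-maximizer.

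The main obstacle is well-definedness, equivalently, showing that any cycle $A_0 = A_n = A^*$ of co-demanded pairs gives $\sum_i \langle p_i, A_i - A_{i-1}\rangle = 0$. The law of demand dispatches the case $n = 2$: if $A, A' \in D(p) \cap D(p')$, then applying LoD to $(A, p), (A', p')$ and to $(A', p), (A, p')$ yields $\langle p - p', A - A'\rangle \leq 0$ and $\langle p' - p, A - A'\rangle \leq 0$, forcing $\langle p, A - A'\rangle = \langle p', A - A'\rangle$ and making the 2-cycle sum vanish. The hard part is cycles of length $n \geq 3$, where pairwise LoD does not suffice on its own. Here I would exploit upper hemicontinuity together with the finiteness of $2^\X$: the sets $P_A = \set{p : A \in D(p)}$ are closed and cover $\R_{++}^\X$, and along a continuous path connecting $p_i$ to $p_{i+1}$ the demand is locally constant except on a boundary locus where two or more bundles are co-demanded. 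Decomposing each leg of a long cycle into short hops along such a path, every boundary crossing contributes a telescoping 2-cycle term that cancels by the argument above, reducing the general cycle to a concatenation of 2-cycles and hence to zero.

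Having a consistent $v$ in hand, verification that $D^v = D$ proceeds in both directions. For $D(p) \subset D^v(p)$, given $A \in D(p)$ and any bundle $B$, I would chain $B$ back to $A^*$ through a sequence of co-demand pairs and use the telescoping definition of $v(B)$ together with the law of demand applied at $p$ to conclude $v(A) - \langle p, A\rangle \geq v(B) - \langle p, B\rangle$. For the reverse inclusion $D^v(p) \subset D(p)$, take $A \in D^v(p)$ and any $A' \in D(p)$; the defining construction gives $v(A) - \langle p, A\rangle = v(A') - \langle p, A'\rangle$. Upper hemicontinuity of $D$ at $p$ then forces $A \in D(p)$: otherwise, at prices $q$ close to $p$ where $A \notin D(q)$, the strict maximization by $A$ in $D^v(q)$ (by continuity of $v(A) - \langle q, A\rangle$) would conflict with the rationalization equality just obtained, since any $A'' \in D(q)$ must itself satisfy the telescoping identity through the chain construction.
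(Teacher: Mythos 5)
First, a point of comparison: the chapter does not actually prove this theorem --- it explicitly omits the proof and attributes the result to Chambers and Echenique (2018) --- so there is no in-paper argument to measure yours against. I can only assess the plan on its own terms. Your architecture (define $v$ by telescoping price-weighted differences along chains of co-demanded bundles, prove cycle consistency, then verify $D^v = D$) is the standard and correct one, and your two-cycle lemma is right: if $A, A' \in D(p)\cap D(p')$, the two applications of the law of demand force $\dotprod{p}{A - A'} = \dotprod{p'}{A - A'}$.

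The difficulty is that the step you yourself identify as ``the hard part'' is where the entire theorem lives, and your sketch of it does not close. Replacing a leg of a long cycle by ``short hops along a path'' produces new chains of adjacencies, and the claim that their contributions telescope back to the original leg's contribution is itself an instance of the cycle-consistency statement you are trying to establish --- as written, the reduction is circular. What rescues the idea is a local-to-global argument you have not supplied: upper hemicontinuity gives each $p$ a neighborhood $V_p$ with $D(q)\subset D(p)$ for all $q\in V_p$, so that $A\mapsto \dotprod{p}{A}$ is a local potential for every co-demand difference witnessed anywhere in $V_p$ (this is exactly where the two-cycle lemma enters); these local potentials agree up to additive constants on overlaps, and convexity (hence simple connectedness) of $\R_{++}^\X$ lets them patch into a global potential. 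Alternatively, one restricts attention to line segments in price space, along which the law of demand collapses to one-dimensional monotonicity, where cyclic consistency is classical. Either way, a real argument is needed where you currently have a picture of ``boundary crossings.''

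Two further gaps. First, your verification that $A\in D(p)$ maximizes $v(\cdot)-\dotprod{p}{\cdot}$ does not follow from ``the law of demand applied at $p$'' along an arbitrary chain to a competitor $B$; the needed inequality $\sum_j \dotprod{q_j - p}{B_j - B_{j-1}}\leq 0$ is not a sum of law-of-demand inequalities. It does work if the chain is taken along the segment from $p$ to a price where $B$ is demanded, where each summand becomes $\tau_j$ times a nonpositive increment; the reverse inclusion likewise needs the segment construction, since a single perturbation direction strictly favoring $A$ over all of $D(p)$ simultaneously need not exist. Second, for a bundle never demanded at any price, a finite ``sufficiently negative'' value of $v$ need not exist: for $D\equiv\set{\X}$, which satisfies both hypotheses, any real-valued $v$ fails because $\sup_{p}\dotprod{p}{\X - A}=+\infty$ for $A\neq \X$. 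One must allow $v$ to take the value $-\infty$ (or restrict the price domain); this is a caveat the theorem statement itself glosses over, but your construction inherits it.
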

We omit a proof, a reference for which is in Section \ref{choice:sec:notes}.

\section{Combinatorial Choice Behavior}
\label{choice:sec:combChoice}

In this section, we describe some types of choice behavior that are particular to combinatorial choice settings.
Fix a combinatorial choice model $(\X, \mbfC)$ with a complete domain of option sets, and let $C$ be a combinatorial choice function.

Substitutability turns out to be essential in market design applications to ensure convergence to a stable matching when the deferred acceptance algorithm takes choice rules of institutions (such as schools) as input.  

We say $C$ satisfies \df{substitutability}\label{ch:choice:def:substitutability} if the elements chosen from a given option set that remain available in a given subset of the given option set are amongst the chosen elements from the given subset.
That is, for each $\bS, \bT \in 2^\X$,
$$\mbox{if } \bT \subset \bS,\mbox{ then } C(\bS) \cap \bT \subset C(\bT).$$
We say $C$ satisfies (combinatorial) \df{path independence}\label{ch:choice:pathIndependence} if the choice from a given option set is the same as the choice from the collection of elements chosen from each of two option sets whose union equals the given option set.
That is, for each $\bS, \bT \in 2^\X$,
$$C(\bS \cup \bT)=C(C(\bS)\cup C(\bT)).$$

In fact, path independence is equivalent to substitutability together with the following choice invariance condition.
We say $C$ satisfies \df{irrelevance of rejected elements}\label{ch:choice:def:IRE} if the removal of some rejected elements from the option set leaves the set of chosen elements unchanged.
That is, for each $\bS, \bT\in 2^\X$, $$\mbox{if }C(\bS)\subseteq \bT \subseteq \bS,\mbox{ then }C(\bS)=C(\bT).$$

\begin{theorem}
	\label{thm:PI=SubsIRE}
	A \ccf is path independent if and only if it satisfies substitutability and IRE. 
\end{theorem}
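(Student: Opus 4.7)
My plan is to establish the equivalence by handling the two directions separately. Both rely on an easy consequence of PI: applying it with $S = T$ gives $C(S) = C(C(S) \cup C(S)) = C(C(S))$, so $C$ is idempotent on its range.

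For the forward direction (PI implies substitutability and IRE), I would first argue substitutability. Given $T \subseteq S$ and $x \in C(S) \cap T$, I would decompose $S$ into the disjoint union $T \cup (S \setminus T)$ and apply PI, obtaining $C(S) = C(C(T) \cup C(S \setminus T))$. Since the chosen set is contained in its option set, $x \in C(T) \cup C(S \setminus T)$, and because $x \in T$ rules out $x \in C(S \setminus T) \subseteq S \setminus T$, it follows that $x \in C(T)$. For IRE, given $C(S) \subseteq T \subseteq S$, I would show that $C(S)$ and $C(T)$ are both equal to $C(C(S) \cup C(T))$: the first equality by applying PI to the pair $(S, T)$ whose union is $S$, and the second by applying PI to the pair $(C(S), T)$ whose union is $T$ together with idempotence.

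For the reverse direction, assuming substitutability and IRE, I would establish the containments $C(S \cup T) \subseteq C(S) \cup C(T) \subseteq S \cup T$ and then invoke IRE. The right containment is immediate, and the left follows from substitutability: any $x \in C(S \cup T)$ lies in $S$ or in $T$, and in whichever of these sets it lies, substitutability applied to $S \subseteq S \cup T$ or $T \subseteq S \cup T$ places $x$ in the corresponding choice. Then IRE converts the chain $C(S \cup T) \subseteq C(S) \cup C(T) \subseteq S \cup T$ into $C(C(S) \cup C(T)) = C(S \cup T)$, which is precisely path independence.

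I expect the main subtle step to be the substitutability argument in the forward direction. It is tempting to begin from $C(S) = C(C(T) \cup C(S))$ (which PI yields when $T \subseteq S$), but that route bogs down in further PI rearrangements that do not isolate $x$. The key idea is instead to use the \emph{disjoint} decomposition $S = T \cup (S \setminus T)$; disjointness forces the side of the partition containing $x$ to be decisive, which is exactly what substitutability demands.
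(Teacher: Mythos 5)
Your proof is correct, and its overall architecture matches the paper's: both directions ultimately rest on the containment $C(S \cup T) \subseteq C(S) \cup C(T) \subseteq S \cup T$. The differences lie in how you execute the forward direction. The paper extracts substitutability from path independence by first deriving subadditivity, $C(S \cup T) = C(C(S) \cup C(T)) \subseteq C(S) \cup C(T)$, and then invoking the equivalence of subadditivity and substitutability, which it delegates to Exercise~\ref{choice:ex:equivSubs}; you instead prove substitutability directly via the disjoint decomposition $S = T \cup (S \setminus T)$, using disjointness to force $x \in C(T)$. This makes your forward direction self-contained (it never needs the subadditivity equivalence), at the cost of a slightly longer argument. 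For IRE the paper runs a single chain of equalities starting from $C(T) = C(C(S) \cup T)$, whereas you argue symmetrically that $C(S)$ and $C(T)$ each equal $C(C(S) \cup C(T))$ --- applying PI once to the pair $(S, T)$, whose union is $S$, and once to the pair $(C(S), T)$, whose union is $T$, together with idempotence. Your version is somewhat easier to audit. The reverse direction is identical in substance; you simply prove the subadditivity containment inline where the paper cites the exercise.
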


\begin{proof}
Substitutability is equivalent to the following subadditivity condition: $C(\bS \cup \bT) \subset C(\bS) \cup C(\bT)$ for every $\bS, \bT \in 2^\X$.
We will make use of this result, whose proof is left as an exercise (see Exercise \ref{choice:ex:equivSubs}).

Suppose that $C$ satisfies substitutability and IRE.
Let $\bS_1, \bS_2 \in 2^\X$.
Then
$
C(\bS_1 \cup \bS_2) \subset C(\bS_1) \cup C(\bS_2) \subset \bS_1 \cup \bS_2,
$
where the first inclusion is by substitutability and its equivalence to subadditivity, and the second inclusion is by the definition of a choice function.
Then $C(\bS_1 \cup \bS_2) = C(C(\bS_1) \cup C(\bS_2))$ follows from IRE, proving path independence.

Suppose that $C$ satisfies path independence.
By path independence, $C(\bS \cup \bT) = C(C(\bS) \cup C(\bT)) \subset C(\bS) \cup C(\bT)$ so subadditivity (and thus substitutability) is satisfied, where the second equality follows from the definition of a choice function.
To show IRE, suppose that $C(\bS) \subseteq \bT \subseteq \bS$, so that $C(\bT) = C(C(\bS) \cup \bT) = C(C(\bS) \cup \bT \cup C(\bS))$.
Then $C(C(\bS) \cup \bT \cup C(\bS)) = C(C(C(\bS) \cup \bT) \cup C(\bS)) = C(C(\bT) \cup C(\bS)) = C(\bT \cup \bS) = C(\bS)$, where the first and third equalities are from path independence and the second one from the previous line of equalities.
Putting the chain of equalities together yields $C(\bT) = C(\bS)$.
\end{proof}

We say $C$ satisfies \df{size monotonicity}\label{ch:choice:def:sizeMonotonicity} if the number of chosen elements does not decrease when the the option set is expanded.
That is, for each $\bS, \bT\in 2^\X$, $$\mbox{if } \bT \subset \bS, \mbox{ then } \card{C(\bT)} \leq \card{C(\bS)}.$$

\begin{proposition}
If $C$ satisfies substitutability and size monotonicity, then it also satisfies path independence.
\end{proposition}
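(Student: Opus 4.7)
The plan is to leverage Theorem~\ref{thm:PI=SubsIRE}, which says path independence is equivalent to substitutability plus IRE. Since substitutability is already given, I only need to establish IRE; that is, whenever $C(\bS) \subseteq \bT \subseteq \bS$, I must show $C(\bS) = C(\bT)$.

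First I would use substitutability directly on the pair $\bT \subseteq \bS$ to obtain $C(\bS) \cap \bT \subseteq C(\bT)$. The hypothesis $C(\bS) \subseteq \bT$ collapses $C(\bS) \cap \bT$ to $C(\bS)$, yielding the inclusion $C(\bS) \subseteq C(\bT)$. This is the ``easy half'' — it doesn't use size monotonicity at all.

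Next I would invoke size monotonicity on the same pair $\bT \subseteq \bS$ to get $\card{C(\bT)} \leq \card{C(\bS)}$. Combined with $C(\bS) \subseteq C(\bT)$, which gives $\card{C(\bS)} \leq \card{C(\bT)}$, we obtain equality of cardinalities. Since $\X$ is finite and one finite set is contained in the other with equal cardinality, the inclusion must be an equality: $C(\bS) = C(\bT)$. This establishes IRE.

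Finally I would appeal to Theorem~\ref{thm:PI=SubsIRE} to conclude path independence from substitutability and IRE. I don't anticipate a real obstacle here; the argument is essentially a two-line sandwich (set inclusion from substitutability, cardinality inequality from size monotonicity) combined with the previously-proved equivalence. The only subtlety worth being explicit about is that the direction in which size monotonicity is applied matches the direction in which substitutability yields the inclusion, and that finiteness of $\X$ (assumed in the combinatorial choice model) is what lets cardinality equality upgrade inclusion to equality.
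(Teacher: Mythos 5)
Your proposal is correct and follows essentially the same route as the paper's own proof: both derive IRE by combining the inclusion $C(\bS) \subseteq C(\bT)$ from substitutability with the cardinality bound from size monotonicity, then invoke Theorem~\ref{thm:PI=SubsIRE}. Your write-up is, if anything, slightly more explicit about why equal cardinality plus inclusion forces set equality.
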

\begin{proof}
We first show that IRE is satisfied.
Let $C(\bS) \subset \bT \subset \bS$.
By substitutability, $C(\bS) \cap T \subset C(\bT)$.
Then $C(\bS) \subset C(\bT)$.
By size monotonicity, $\card{C(\bT)} \leq \card{C(\bS)}$.
But then $C(\bS) = C(\bT)$.
So IRE is satisfied.
Then from Theorem \ref{thm:PI=SubsIRE}, we obtain path independence.
\end{proof}

\section{Path Independent Choice}
\label{choice:sec:PIchoicefunction}

As we will see in Section \ref{choice:sec:choiceAndDA}, path independence is crucial to arriving at a stable outcome via a deferred acceptance algorithm, so it is worth understanding the structure it entails.
We describe two results regarding the structure of path independent choice functions that offer insight into the previously studied lattice structure of stable matchings.

Fix a finite combinatorial choice model $(\X, \mbfC)$ with a complete domain of option sets, and fix a path independent \ccf $C$.
To simplify the discussion, assume that \emph{no element is irrelevant to $C$}, that is, for every $a \in \X$, there exists $\bS \in 2^\X$ such that $a \in C(\bS)$.
All the results in this section have a simple adaptation for $C$ that do not satisfy this condition.

\subsection{The lattice of maximal option sets of a path independent choice function}
\label{choice:sec:latticeMaxOptionSets}
We show how a path independent choice function can be represented by a particular lattice of sets.

Two option sets $\bS, \bT \in 2^\X$ are \emph{choice-equivalent} if $C(\bS) = C(\bT)$.
\begin{lemma}\label{choice:lem:intervalProperty}
For each $\bS \in 2^\X$, there exists a unique set $\bS^\sharp \in 2^\X$ such that for every $\bT \in 2^\X$, $\bT$ is choice-equivalent to $\bS$ if and only if $C(\bS) \subset \bT \subset \bS^\sharp$.
\end{lemma}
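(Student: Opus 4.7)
The plan is to take $\bS^\sharp$ to be the union of all option sets choice-equivalent to $\bS$, and then use path independence (equivalently, substitutability together with IRE, by Theorem \ref{thm:PI=SubsIRE}) to show that this union is itself choice-equivalent to $\bS$ and that the equivalence class fills the whole interval $[C(\bS), \bS^\sharp]$.

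First I would establish the auxiliary identity $C(C(\bS)) = C(\bS)$, which follows from IRE applied to the chain $C(\bS) \subset C(\bS) \subset \bS$. Using this, I would prove that the collection $\mathcal{E}_\bS = \{\bT \in 2^\X : C(\bT) = C(\bS)\}$ is closed under pairwise (and hence finite) unions: if $\bT_1, \bT_2 \in \mathcal{E}_\bS$, then by path independence
\[
C(\bT_1 \cup \bT_2) = C(C(\bT_1) \cup C(\bT_2)) = C(C(\bS) \cup C(\bS)) = C(C(\bS)) = C(\bS).
\]
Since $\X$ is finite, $\mathcal{E}_\bS$ has only finitely many members, so iterating this yields that $\bS^\sharp := \bigcup_{\bT \in \mathcal{E}_\bS} \bT$ is itself a member of $\mathcal{E}_\bS$; it is then automatically the unique $\subset$-maximum of $\mathcal{E}_\bS$.

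Next I would verify the interval characterization. For the forward direction ("$\bT \in \mathcal{E}_\bS$ implies $C(\bS) \subset \bT \subset \bS^\sharp$"), the containment $C(\bS) = C(\bT) \subset \bT$ is built into the definition of a choice function, and $\bT \subset \bS^\sharp$ is by construction. For the converse, suppose $C(\bS) \subset \bT \subset \bS^\sharp$. Since $C(\bS^\sharp) = C(\bS)$ we have $C(\bS^\sharp) \subset \bT \subset \bS^\sharp$, and IRE applied to this sandwich gives $C(\bT) = C(\bS^\sharp) = C(\bS)$, so $\bT \in \mathcal{E}_\bS$. Finally, uniqueness of $\bS^\sharp$ is immediate: any set satisfying the stated bi-implication must be the $\subset$-maximum of $\mathcal{E}_\bS$.

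I do not expect a serious obstacle; the only subtle step is recognizing that the "upper endpoint" of the interval must be shown to lie in the equivalence class before IRE can be invoked to fill in the interval, which is exactly what the union-closure argument supplies. The finiteness of $\X$ is used only to pass from pairwise closure under unions to closure under the (finite) union defining $\bS^\sharp$.
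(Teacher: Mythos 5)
Your proof is correct and follows essentially the same route as the paper's: define $\bS^\sharp$ as the union of the choice-equivalence class of $\bS$, use path independence plus idempotence to show the class is closed under finite unions (hence contains its union, by finiteness of $\X$), and then apply IRE to the sandwich $C(\bS^\sharp) \subset \bT \subset \bS^\sharp$ to fill in the interval. The only cosmetic difference is that you derive idempotence directly from IRE, whereas the paper defers it to an exercise.
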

\begin{proof}
Let $\mathbf{pre}_C(\bY)$ be the pre-image of $\bY \in 2^\X$ under $C$, defined by $\mathbf{pre}_C(\bY) = \set{\bT \in 2^\X : C(\bT) = \bY}$.
Then the collection of all option sets that are choice-equivalent to $\bS$ is simply $\mathbf{pre}_C(C(\bS))$.
	
Define $\bS^\sharp = \bigcup \mathbf{pre}_C(C(\bS))$.
First, note that $C$ is \df{idempotent} since it is path independent (see Exercise \ref{exIdem} for the definition).
Second, if $\bT_1, \bT_2 \in \mathbf{pre}_C(C(\bS))$, then $C(\bT_1 \cup \bT_2) = C(C(\bT_1) \cup C(\bT_2)) = C(C(\bS) \cup C(\bS)) = C(C(\bS))$, where the first equality is from path independence and the third from idempotence.
This shows that the pre-image is closed under finite unions.
Since $\X$ is finite, it means $\bigcup \mathbf{pre}_C(C(\bS)) \in \mathbf{pre}_C(C(\bS))$ and so $\bS^\sharp$ is the unique maximal member of the pre-image of $C(\bS)$.
Next, if $C(\bT) = C(\bS)$, then $C(\bS) \subset \bT$, since $C(\bT) \subset \bT$ by definition.
Finally, if $\bT$ satisfies $C(\bS) \subset \bT \subset \bS^\sharp$, then $C(\bT) = C(\bS)$, since $C(\bS^\sharp) = C(\bS)$ and since $C$ satisfies IRE, given Theorem \ref{thm:PI=SubsIRE}.
\end{proof}

We say an option set $\bS$ is \emph{maximal} if there is no larger option set from which the same set of elements is chosen.
So $\bS$ is maximal if and only if $\bT\supseteq \bS$ implies $C(\bT) = C(\bS)$.
From the previous lemma we know that option set $\bS$ is maximal if and only if $\bS = \bS^\sharp$, with $\bS^\sharp$ as defined in the lemma's statement.

Let $\mathcal{M}$ denote the set of maximal option sets for $C$, ordered by set inclusion.
The following characterizes the predecessors in $\mc M$ of a maximal option set $\bS$.
\begin{lemma} \label{maximal has maximal kids}
For each $\bS \in \mathcal{M}$ and each $a\in C(\bS)$, $\bS \setminus \{a\} \in \mathcal{M}$.
\end{lemma}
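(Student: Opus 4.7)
The plan is to prove $\bS \setminus \{a\}$ is maximal by showing its $\sharp$-completion (in the sense of Lemma \ref{choice:lem:intervalProperty}) equals itself. Introduce the abbreviations $A := C(\bS)$ and $B := C(\bS \setminus \{a\})$, and first collect three easy facts that will be used throughout: (i) by substitutability applied to $\bS \setminus \{a\} \subset \bS$, one has $A \setminus \{a\} \subset B$; (ii) since $B \subset \bS \setminus \{a\}$, we get $a \notin B$; (iii) because $a \in A = C(\bS)$ and $\{a\} \subset \bS$, substitutability forces $a \in C(\{a\})$, hence $C(\{a\}) = \{a\}$. Next, decompose $\bS = (\bS \setminus \{a\}) \cup \{a\}$ and apply path independence together with (iii) to obtain the pivotal identity $A = C(\bS) = C(C(\bS \setminus \{a\}) \cup C(\{a\})) = C(B \cup \{a\})$.

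Now assume for contradiction that $M := (\bS \setminus \{a\})^\sharp \supsetneq \bS \setminus \{a\}$; by Lemma \ref{choice:lem:intervalProperty}, $C(M) = B$. Split on whether $a$ lies in $M$. If $a \in M$, then $M \supset \bS$, so by path independence
\[
B = C(M) = C(M \cup \bS) = C(C(M) \cup C(\bS)) = C(B \cup A).
\]
Fact (i) gives $A \subset B \cup \{a\}$, and since $a \in A$, in fact $B \cup A = B \cup \{a\}$. Combined with the pivotal identity, this yields $B = C(B \cup \{a\}) = A$, contradicting (ii) since $a \in A$ but $a \notin B$.

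If instead $a \notin M$, then by path independence and (iii),
\[
C(M \cup \{a\}) = C(C(M) \cup C(\{a\})) = C(B \cup \{a\}) = A = C(\bS).
\]
Since $M \supsetneq \bS \setminus \{a\}$ and $a \notin M$, there exists $b \in M \setminus \bS$, so $M \cup \{a\} \supsetneq \bS$. Yet $M \cup \{a\}$ belongs to $\mathbf{pre}_C(C(\bS))$, which by maximality of $\bS$ (i.e., $\bS = \bS^\sharp$) must be contained in $\bS$—a contradiction. Both cases being impossible, $M = \bS \setminus \{a\}$, so $\bS \setminus \{a\} \in \mc M$.

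The main obstacle is the first case: one must notice that $A \setminus \{a\} \subset B$ collapses the potentially large set $B \cup A$ down to just $B \cup \{a\}$, so that the pivotal identity can be invoked to force $B = A$. Everything else is bookkeeping via path independence together with the characterization of maximal sets from Lemma \ref{choice:lem:intervalProperty}.
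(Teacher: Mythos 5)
Your proof is correct and follows essentially the same route as the paper: both take the maximal completion $M = (\bS\setminus\{a\})^\sharp$, adjoin $a$, and use path independence to exhibit a strict superset of $\bS$ that is choice-equivalent to $\bS$, contradicting maximality (your second case is verbatim the paper's argument). Your explicit case split on whether $a \in M$ is extra bookkeeping that the paper elides---it asserts $\bS \subsetneq \bS' \cup \{a\}$ without comment---so your version is, if anything, slightly more careful.
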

\begin{proof}
To obtain a contradiction, suppose there exist $\bS\in \mathcal{M}$ and  $a\in C(\bS)$ such that $\bS\setminus \{a\}\notin \mathcal{M}$.
Let $\bS' \in \mathcal{M}$ such that $C(\bS')=C(\bS\setminus \{a\})$.
Since $\bS\setminus \{a\}\notin \mathcal{M}$, $\bS\setminus \{a \} \subsetneq \bS'$.
Now, consider the option set $\bS'\cup \{a\}$.
Note that $\bS \subsetneq \bS' \cup \{a\}$.  
Moreover, since $C$ is path independent,  $C(\bS'\cup \{a\})=C(C(\bS')\cup \{a\})$.
Since  $C(\bS')=C(\bS\setminus \{a\})$, we get $C(\bS'\cup \{a\})=C(C(\bS\setminus \{a\})\cup \{a\})$.
Again by path independence, $C(C(\bS\setminus \{a\})\cup \{a\})=C(\bS)$.
Thus, we get $C(\bS'\cup \{a\})=C(\bS)$.
Since $\bS \subsetneq \bS' \cup \{a\}$, this contradicts that $\bS\in \mathcal{M}$.
\end{proof}

We now prove that the intersection of maximal option sets is a maximal option set.
\begin{lemma}
The family of maximal sets $\mc M$ is intersection-closed, that is, $\bS_1, \bS_2 \in \mc M$ implies $\bS_1 \cap \bS_2 \in \mc M$.
\end{lemma}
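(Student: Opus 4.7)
The plan is to exploit the interval structure of choice-equivalence classes given by Lemma \ref{choice:lem:intervalProperty}. Write $U = (\bS_1 \cap \bS_2)^\sharp$, the inclusion-largest option set choice-equivalent to $\bS_1 \cap \bS_2$. Since $\bS_1 \cap \bS_2 \subset U$ is automatic from $C(\bS_1\cap\bS_2) \subset \bS_1 \cap \bS_2$ and the characterization in Lemma \ref{choice:lem:intervalProperty}, it suffices to prove $U \subset \bS_1 \cap \bS_2$, which then gives $U = \bS_1 \cap \bS_2$ and hence $\bS_1 \cap \bS_2 \in \mc M$.

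I would show $U \subset \bS_1$ and invoke symmetry for $U \subset \bS_2$. The key computation is to evaluate $C(\bS_1 \cup U)$ and invoke maximality of $\bS_1$. Note that $C(U) = C(\bS_1 \cap \bS_2) \subset \bS_1 \cap \bS_2 \subset \bS_1$, so the set $C(\bS_1) \cup C(U)$ sits in the interval $C(\bS_1) \subset C(\bS_1) \cup C(U) \subset \bS_1$. Applying IRE collapses $C(C(\bS_1) \cup C(U))$ to $C(\bS_1)$, and then path independence yields $C(\bS_1 \cup U) = C(C(\bS_1) \cup C(U)) = C(\bS_1)$. So $\bS_1 \cup U$ is choice-equivalent to $\bS_1$; since $\bS_1$ is maximal we have $\bS_1^\sharp = \bS_1$, and Lemma \ref{choice:lem:intervalProperty} forces $\bS_1 \cup U \subset \bS_1$, giving $U \subset \bS_1$.

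I do not anticipate any real obstacle; the argument is a compact use of Lemma \ref{choice:lem:intervalProperty}, path independence, and IRE. The one design choice worth flagging is to take the union $\bS_1 \cup U$ and collapse it through the \emph{chosen subset} $C(U)$ rather than $U$ itself: this is precisely what places $C(\bS_1) \cup C(U)$ inside the interval $[C(\bS_1), \bS_1]$ so that IRE applies, after which maximality of $\bS_1$ does the remaining work.
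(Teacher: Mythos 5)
Your proof is correct and follows essentially the same route as the paper's: both evaluate $C(\bS_1 \cup (\bS_1\cap\bS_2)^\sharp)$, show it equals $C(\bS_1)$ via path independence, and then use maximality of $\bS_1$ (and symmetrically $\bS_2$) together with Lemma \ref{choice:lem:intervalProperty} to sandwich $(\bS_1\cap\bS_2)^\sharp$ between $\bS_1\cap\bS_2$ and itself. The only cosmetic difference is that you collapse $C(C(\bS_1)\cup C(U))$ with IRE where the paper instead applies path independence a second time through $C(\bS_1\cup\bT)$.
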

\begin{proof}
Let $\bS_1, \bS_2 \in \mc M$ and define $\bT = \bS_1 \cap \bS_2$.
To show $\bT \in \mc M$, it is sufficient to show that $\bT^\sharp = \bT$.
Notice that $C(\bS_1 \cup \bT^\sharp) = C(C(\bS_1) \cup C(\bT^\sharp)) = C(C(\bS_1) \cup C(\bT)) = C(\bS_1 \cup \bT) = C(\bS_1)$, where the first and third equality are from path independence and the second is from the definition of choice-equivalence.
Then $\bS_1 \cup \bT^\sharp$ is choice-equivalent to $\bS_1$.
By the same argument, $\bS_2 \cup \bT^\sharp$ is choice-equivalent to $\bS_2$.
Since $\bS_1$ and $\bS_2$ are maximal, $\bT^\sharp$ is a subset of $\bS_1$ and $\bS_2$, but by Lemma \ref{choice:lem:intervalProperty}, $\bT^\sharp \supseteq \bT = \bS_1 \cap \bS_2$, proving $\bT^\sharp = \bT \in \mc M$.
\end{proof}

Since $\X$ is a clearly a maximal set, $\mc M$ has a top, that is, a greatest member $\X$ (under $\subset$).
A useful fact about any finite and intersection-closed family of sets with a top is that set inclusion is a complete lattice order, with the meet operator coinciding with set intersection.\footnote{
The join operator is only equal to the set union if the family is also closed under set unions.
Otherwise, the join of two sets is given by the intersection of all upper bounds in the family of the two sets.}

\begin{theorem}\label{choice:thm:closureLattice}
The family of maximal sets $\mc M$ ordered by set inclusion is a complete lattice of sets, with $\X$ at the top and $\emptyset$ at the bottom.
\end{theorem}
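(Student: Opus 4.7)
The plan is to reduce the theorem to the folklore fact cited in the footnote: any finite, intersection-closed family of sets with a top element is automatically a complete lattice under set inclusion, with meet coinciding with set intersection. Three of the required hypotheses are essentially already in hand: $\mc M$ is finite because $\mc M \subset 2^\X$ and $\X$ is finite; $\mc M$ is intersection-closed by the preceding lemma; and $\X \in \mc M$ (no option set properly contains $\X$, so $\X = \X^\sharp$), giving the top element. So the lattice structure, and the identification of meet with intersection, follows immediately.

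What remains is to show that $\emptyset$ belongs to $\mc M$, for then $\emptyset$ is automatically the bottom (it is the $\subset$-least subset of $\X$). Since $C(\emptyset) = \emptyset$, this amounts to proving that $\emptyset^\sharp = \emptyset$, i.e., no nonempty option set is choice-equivalent to $\emptyset$. This is where the standing assumption that no element is irrelevant to $C$ does the work.

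My plan for this step is a short proof by contradiction. Suppose some nonempty $\bT$ satisfies $C(\bT) = \emptyset$, and pick any $a \in \bT$. By the no-irrelevance assumption, there is some $\bS$ with $a \in C(\bS)$. I would then compute $C(\bS \cup \bT)$ using path independence and idempotence (the latter established in Exercise \ref{exIdem}):
\[
C(\bS \cup \bT) \;=\; C(C(\bS) \cup C(\bT)) \;=\; C(C(\bS) \cup \emptyset) \;=\; C(C(\bS)) \;=\; C(\bS),
\]
so $a \in C(\bS \cup \bT)$. On the other hand, substitutability applied to $\bT \subset \bS \cup \bT$ forces $C(\bS \cup \bT) \cap \bT \subset C(\bT) = \emptyset$, contradicting $a \in C(\bS \cup \bT) \cap \bT$.

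The main obstacle is really this last step: the complete lattice conclusion itself is a clean corollary of the cited order-theoretic fact, so everything hinges on locating $\emptyset$ in $\mc M$, and that in turn is exactly where one has to put the no-irrelevance assumption to use via the path-independence identity plus substitutability. Nothing else in the argument requires more than routine bookkeeping.
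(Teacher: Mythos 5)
Your proposal is correct and follows essentially the same route as the paper, which likewise derives the theorem from the finiteness and intersection-closedness of $\mc M$ together with $\X \in \mc M$ via the cited order-theoretic fact. Your explicit verification that $\emptyset \in \mc M$ (via path independence, substitutability, and the no-irrelevance assumption) is a welcome addition: the paper leaves this step implicit, and it is precisely where the standing no-irrelevance assumption is needed, since for instance the identically-empty choice function would have $\X$ as the bottom of $\mc M$.
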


Also observe that all maximal sets and the entire choice lattice can be constructed starting from $\X$ and subtracting a chosen element at each step.

\begin{example}
\label{choice:eg:closurelattice} 

Suppose $\X=\{a,b,c\}$ and suppose $C$ satisfies
$C(\{a,b,c\})=\{a,b\}$ and $C(\bS)=\bS$ for each $\bS$ with $\card{S} \leq 2$.
It is easy to verify that $C$ satisfies path independence.
Figure \ref{lattice_figure} displays the Hasse diagram of the lattice of $\mc M$ defined by $C$.
To see how it is obtained, first imagine drawing the top of the lattice, $\X=\{a,b,c\}$.
Then, by removing each of the chosen elements $b$ and $a$, we obtain the predecessors of $\X$ a level below: $\set{a, c}$ and $\set{a, c}$.
The remainder of the diagram is obtained in a similar fashion.

	\begin{figure}
		\label{lattice_figure}
		
		\begin{center}
		\scalebox{0.8}{
			\begin{tikzpicture}
			
			
			\node(l1) at (0,4.5) {\{\textbf{ab}c\}};
			\node(1) at (0,4) {};
			\draw [fill = black](1) circle (4pt);
			
			
			\node (l21) at (-2.5,2.25) {\{\textbf{ac}\}};
			\node(21) at (-2,2) {};
			\draw [fill = black](21) circle (4pt);
			\node (l22) at (2.5,2.25) {\{\textbf{bc}\}};
			\node(22) at (2,2) {};
			\draw [fill = black](22) circle (4pt);
			
			
			\node (l31) at (-4.5,0.25) {\{\textbf{a}\}};
			\node(31) at (-4,0) {};
			\draw [fill = black](31) circle (4pt);
			\node (l32) at (-0.5,-0.25) {\{\textbf{c}\}};
			\node(32) at (0,0) {};
			\draw [fill = black](32) circle (4pt);
			\node (l33) at (4.5,0.25) {\{\textbf{b}\}};
			\node(33) at (4,0) {};
			\draw [fill = black](33) circle (4pt);
			
			
			\node (l41) at (0,-2.5) {$\emptyset$};
			\node(41) at (0,-2) {};
			\draw [fill = black](41) circle (4pt);
			
			
			\draw [dashed, color = black, thick] (1) -- (21) node[draw=none,fill=none,midway,above ] {\textcolor {black}{\textbf{-b}}};
			
			\draw [dashed, color = black, thick] (1) -- (22) node[draw=none,fill=none,midway,above ] {\textcolor {black}{\textbf{-a}}};
			
			\draw [dashed, color = black, thick] (21) -- (31) node[draw=none,fill=none,midway,above ] {\textcolor {black}{\textbf{-c}}};
			
			\draw [dashed, color = black, thick] (21) -- (32) node[draw=none,fill=none,midway,above ] {\textcolor {black}{\textbf{-a}}};

			\draw [dashed, color = black, thick] (22) -- (32) node[draw=none,fill=none,midway,above ] {\textcolor {black}{\textbf{-b}}};
			
			\draw [dashed, color = black, thick] (22) -- (33) node[draw=none,fill=none,midway,above ] {\textcolor {black}{\textbf{-c}}};
			
			\draw [dashed, color = black, thick] (31) -- (41) node[draw=none,fill=none,midway,above ] {\textcolor {black}{\textbf{-a}}};
			
			\draw [dashed, color = black, thick] (32) -- (41) node[draw=none,fill=none,midway,right ] {\textcolor {black}{\textbf{-c}}};

			\draw [dashed, color = black, thick] (33) -- (41) node[draw=none,fill=none,midway,above ] {\textcolor {black}{\textbf{-b}}};
			
			\end{tikzpicture}
		}
		\end{center}
		\caption{Hasse diagram of lattice of maximal option sets of Example \ref{choice:eg:closurelattice}, with chosen elements in boldface}
		
	\end{figure}
	
\end{example}

\subsection{Maximizer-collecting rationalization}
\label{sec:MCchoice}

Recall that for any set $\bS$ and binary relation $\succeq$, $\Top(\bS,\succeq)$ denotes the set of $\succeq$-greatest elements in $\bS$, that is, $\Top(\bS,\succeq) =\set{a \in \bS: \forall b \in \bS, a \succeq b}$.

The \df{maximizer-collecting}\label{choice:df:MCrule} choice rule $\AM$ according to a finite sequence of linear orderings $(\succeq)_1^m = (\succeq_1, \ldots ,\succeq_m)$ is defined for each $\bS \in 2^\X$ by collecting the maximizers in $\bS$ of each linear ordering in the sequence, that is,
$$\AM[(\succeq)_1^m](\bS) = \bigcup_{i\in \{1,\ldots, m\}} \Top(\bS, \succeq_{i}).$$
A choice function $C$ has a \df{maximizer-collecting (MC) rationalization}\label{choice:df:MCrational}---also known as an Aizerman-Malishevski decomposition---of size $m \in \mathbb{N}$ if there exists a finite sequence of linear orderings $(\succeq)_1^m = (\succeq_1, \ldots ,\succeq_m)$ for which the choice function defined by the MC choice rule $\AM[(\succeq)_1^m]$ equals $C$.

To illustrate an MC rationalization, let us return to the choice function $C$ in Example \ref{choice:eg:closurelattice}.
Let $\succeq_1$ and $\succeq_2$ be the linear orderings defined as $a\succeq_1 c\succeq_1 b$ and $b\succeq_2 c \succeq_2 a$.
We will observe that $C$ has an MC rationalization via $\{\succeq_1,\succeq_2\}$.
Consider the option set $\{a,b,c\}$.
Note that $a$ is the maximizer of $\succeq_1$ and $b$ is the maximizer of $\succeq_2$.
That is, collecting the maximizers yields $\{a,b\}$, which coincides with $C(\{a,b,c\})$.
Consider the option set $\{b,c\}$.
Note that $c$ is the maximizer of $\succeq_1$ and $b$ is the maximizer of $\succeq_2$.
So, collecting the maximizers yields $\set{b, c}$, which again coincides with $C(\set{b,c})$.
It can be verified that the coincidence holds for every option set.
In fact, the next result shows that any path independent choice function has an MC rationalization.

\begin{theorem}
	\label{choice:thm:MCrational}
	A choice function is path independent if and only if it has an MC rationalization.
\end{theorem}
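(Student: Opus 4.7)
The plan is to prove both directions separately. For the easy direction, I would fix a family $(\succeq)_1^m$ of linear orderings and show that $\AM[(\succeq)_1^m]$ is path independent. The key observation is that for any linear order $\succeq_i$ and any option sets $\bS, \bT$, the $\succeq_i$-greatest element of $\bS \cup \bT$ is necessarily the $\succeq_i$-greatest element of one of $\bS$ or $\bT$, hence lies in $C(\bS) \cup C(\bT)$. Combined with $C(\bS) \cup C(\bT) \subseteq \bS \cup \bT$, this forces $\Top(\bS \cup \bT, \succeq_i) = \Top(C(\bS) \cup C(\bT), \succeq_i)$, and taking the union over $i$ yields the path independence identity.

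For the hard direction, I plan to exhibit one linear ordering per pair $(\bS, a)$ with $\bS$ a maximal option set (in the sense of Section \ref{choice:sec:latticeMaxOptionSets}) and $a \in C(\bS)$. The ordering will be read off from a chain of maximal sets $\X = \bS_0 \supsetneq \bS_1 \supsetneq \cdots \supsetneq \bS_n = \emptyset$, where $\bS_i = \bS_{i-1} \setminus \{a_i\}$ with $a_i \in C(\bS_{i-1})$, the chain being required to pass through $\bS$ with $a$ as the very next element removed. The associated ordering is $a_1 \succ a_2 \succ \cdots \succ a_n$. The main obstacle is establishing that such a chain exists through any prescribed maximal $\bS$: the key sub-lemma says that whenever $\bS' \in \mc M$ strictly contains a maximal $\bS$, some element of $C(\bS')$ lies outside $\bS$; otherwise $C(\bS') \subseteq \bS \subsetneq \bS'$ and IRE would force $C(\bS') = C(\bS)$, contradicting maximality of $\bS'$. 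Iterated application of this lemma, together with Lemma \ref{maximal has maximal kids} to assure that removing a chosen element preserves maximality, constructs the portion of the chain from $\X$ down to $\bS$; the same argument applied below produces the portion from $\bS \setminus \{a\}$ down to $\emptyset$.

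With the chain and ordering $\succeq_{\bS,a}$ in hand, I would verify two containments. First, every $\succeq_{\bS,a}$-maximizer on a budget is $C$-chosen: given $\bT \in 2^\X$, if $a_j = \Top(\bT, \succeq_{\bS,a})$, then $\bT \subseteq \bS_{j-1}$ by minimality of $j$, and substitutability applied to $a_j \in C(\bS_{j-1}) \cap \bT$ yields $a_j \in C(\bT)$, proving $\AM(\bT) \subseteq C(\bT)$. Second, every chosen element is realized as a top: given $b \in C(\bT)$, set $\bS = \bT^\sharp$ and $a = b$, observing that $b \in C(\bT^\sharp) = C(\bT)$ by Lemma \ref{choice:lem:intervalProperty}. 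Since $\bT \subseteq \bT^\sharp$, no element of $\bT$ appears before $b$ in $\succeq_{\bT^\sharp, b}$, so $\Top(\bT, \succeq_{\bT^\sharp, b}) = b$. This delivers $C(\bT) \subseteq \AM(\bT)$ and completes the rationalization.
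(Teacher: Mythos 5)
Your proof is correct and takes essentially the same route as the paper's: both directions hinge on maximal chains in the lattice $\mc M$ of maximal option sets (one linear order per chain), with substitutability giving $\AM(\bT)\subseteq C(\bT)$ and the passage to $\bS^\sharp$ giving the reverse containment. The one genuine addition is your sub-lemma proving that a chain through a prescribed maximal $\bS$ and $\bS\setminus\{a\}$ actually exists---the paper merely asserts this---though note the contradiction there is with the maximality of $\bS$, not of $\bS'$.
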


\begin{proof}
Let $C$ be a path independent choice function.
That any MC rationalizable choice function is path independent is for Exercise \ref{choice:ex:MC}.

Define binary relation $\rightarrow$ on $\mathcal{M}$ as follows:
for each $\bS,\bT\in \mathcal{M}$, $\bS\rightarrow \bT$ if $\bS = \bT$ or $\bT = \bS\setminus \{a\}$ for some $a\in C(\bS)$.
From Lemma \ref{maximal has maximal kids}, we see that $\rightarrow$ is the predecessor relation in the partially ordered set $(\mc M, \subset)$.
So naturally $\supseteq$ is the transitive closure of $\rightarrow$.

We will construct a sequence of linear orderings and verify that $C$ has a MC rationalization via that sequence.
Remember that $\X$ and $\emptyset$ are the top and bottom members of $\mathcal{M}$, respectively.
Take any path in $\rightarrow$ that connects $\X$ to $\emptyset$, say $\X \rightarrow \bS_1\rightarrow \cdots \rightarrow \bS_k\rightarrow \emptyset$.
By definition of $\rightarrow$, note that $k=n-1$ and there exists an ordering of elements ($a_1,\ldots ,a_{n})$ such that $\{a_1\}=\X\setminus \bS_1$, $\{a_{i}\}=\bS_{i-1}\setminus \bS_{i}$ for each $i\in \{2,\ldots, n-1\}$, and $\{a_{n}\}=\bS_{n-1}$.
This defines a linear ordering $\succeq$ by $a_1\succeq \cdots \succeq a_n$.
Let $O$ be the set of all orderings obtainable in this manner, which must have finite cardinality $m$.
We show that $C$ has a MC rationalization by any sequence $(\succeq)_1^m = (\succeq_1, \ldots ,\succeq_m)$ of linear orders drawn from $O$ without replacement.

First, we show that $C(\bS)\subseteq \AM[(\succeq)_1^m](\bS)$.
Take any $\bS\in 2^\X$.
Take any $a\in C(\bS)$.
Invoking Lemma \ref{choice:lem:intervalProperty}, let $\bT = \bS^\sharp$.
Note that $\bS\subseteq \bT$.
Consider any path in $\rightarrow$ that connects $\X$ to $\emptyset$ and includes $\bT$ and $\bT\setminus \{a\}$, recognizing that at least one must exist.
Consider the linear ordering constructed for this path, say $\succeq \in (\succeq)_1^m$.
Note that $\bT\setminus \{a\}$ constitutes the set of elements that are ranked below $a$ at $\succeq$, that is, $\bT\setminus \{a\}=\{b\in \X\setminus \{a\}: a\succeq b\}$.
Since $\bS\subseteq \bT$, $a$ is the maximizer of $\succeq$ in $\bS$.

Next, we show that $\AM[(\succeq)_1^m](\bS) \subseteq C(\bS)$.
Each $a \in \AM[(\succeq)_1^m](\bS)$ is the maximizer of $\succeq_j$ in $\bS$ for some $\succeq_j \in (\succeq)_1^m$.
Define $\bT = \X \sminus \set{b : b \succ_j a}$.
Since $a$ is the maximizer of $\succeq_j$ in $\bS$, it must be that $\bS \subseteq \bT$. 
By the construction of $\succeq_j$, $\bT \in \mc M$ and $a \in C(\bT)$.
By Theorem \ref{thm:PI=SubsIRE}, $C$ satisfies substitutability, and so $a \in C(\bS)$.
\end{proof}

Let us illustrate, using the choice function $C$ in Example \ref{choice:eg:closurelattice}, how the MC rationalization is constructed from the lattice of maximal option sets $\mc M$ as explained in the proof of Theorem \ref{choice:thm:MCrational}.
Observe that in the choice lattice in Figure \ref{lattice_figure}, there are three different paths that connect $\X=\{1,2,3\}$ to $\emptyset$:
(P1) $\{a,b,c\}\rightarrow \{b,c\} \rightarrow \{c\}\rightarrow \emptyset$, (P2) $\{a,b,c\}\rightarrow \{a,c\} \rightarrow \{a\}\rightarrow \emptyset$, and (P3) $\{a,b,c\}\rightarrow \{a,c\} \rightarrow \{c\}\rightarrow \emptyset$.
Listing the subtracted chosen elements while we follow each path yields three priority orderings: 
$a\succeq_1 b \succeq_1 c$, $b\succeq_2 c \succeq_2 a$, and $b\succeq_3 a \succeq_3 c$.
In fact, $(\succeq_1,\succeq_2,\succeq_3)$ provides an MC rationalization of $C$.
Remember that we had already discovered that $C$ is MC rationalized by just the first two priority orderings $(\succeq_1,\succeq_2)$.
That is, the MC rationalization constructed in the proof of Theorem \ref{choice:thm:MCrational} is not necessarily a \emph{minimum size} MC rationalization.
In fact, it is the maximum size MC rationalization in the sense that it includes every linear ordering that can appear in any MC rationalization.

\section{Combinatorial choice from priorities and capacities}
\label{choice:sec:choiceRules}

Choice rules that make use of priorities to ration a scarce discrete resource, such as admissions into a school, have been attractive to market designers.
In this section, we study a few different classes of rules that make use of priorities.

Fix a finite combinatorial choice model $(\X, \mbfC)$ with a complete domain of option sets.
A \emph{priority ordering} $\succeq$ is a complete, transitive, and anti-symmetric binary relation over $\X$, where $a \succ b$ denotes that $a$ is higher priority than $b$.
A \emph{capacity} $q$ is a non-negative integer.

The \emph{priority maximization} choice rule defines for each pair $(q, \succeq)$ a choice function $C^{q, \succeq} \in \mbfC$ as follows:
for each $\bS \in 2^\X$, if $\card{\bS} \leq q$, then $C^{q, \succeq}(\bS) = \bS$, otherwise, $C^{q, \succeq} \in \mbfC = \set{s_1, \dots, s_q} \subset \bS$, where $m < q$ implies $s_m \succ s_q$ and $s \in \bS \sminus \set{s_1, \dots, s_q}$ implies $s_q \succ s$.
Let $\mbfC^{prio}$ be the set of all choice functions defined by the priority maximization choice rule from $(q, \succeq)$ pairs.

A choice function $C \in \mbfC$ is \df{capacity-filling for capacity $q$} if $\card{C(\bS)} = \min\set{\card{\bS},q}$ for every $\bS \in 2^\X$.
It is capacity-filling if there exists a capacity $q$ for which it is capacity-filling.

A choice function $C \in \mbfC$ \df{respects priorities $\succeq$} if for every $\bS \in 2^\X$ and every $a, b \in \bS$, if $a \in C(\bS)$ and $b \nin C(\bS)$, then $a \succ b$.
Respecting priority simply means every chosen element has higher priority than every element available but not chosen.

It is not hard to see that each $C^{q, \succeq}$ is capacity-filling for capacity $q$ and respects priority $\succeq$.
Left for Exercise \ref{ex:fillRespecting} is the straightforward proof that for any $(q, \succeq)$, if $C \in \mbfC$ is capacity-filling for capacity $q$ and respects priorities $\succeq$, then $C = C^{q, \succeq}$.

\begin{theorem}\label{thm:fillRespecting}
A choice function is capacity-filling for capacity $q$ and respects priorities $\succeq$ if and only if it is defined by the priority maximization choice rule from $(q, \succeq)$.
\end{theorem}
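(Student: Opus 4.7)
The plan is to prove both directions of the biconditional directly from the definitions, with a case split on whether $\card{\bS} \le q$ or $\card{\bS} > q$.

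For the forward direction (that $C^{q,\succeq}$ is capacity-filling for $q$ and respects $\succeq$), I would work straight from the defining formula. When $\card{\bS} \le q$, the rule gives $C^{q,\succeq}(\bS) = \bS$, so $\card{C^{q,\succeq}(\bS)} = \card{\bS} = \min\set{\card{\bS},q}$ and there are no rejected elements, making priority-respect vacuous. When $\card{\bS} > q$, the rule selects $\set{s_1,\dots,s_q}$, the top $q$ elements under $\succeq$, so $\card{C^{q,\succeq}(\bS)} = q = \min\set{\card{\bS},q}$; moreover by the definition every $s_m$ with $m \le q$ is strictly above $s_q$, which is strictly above every element of $\bS \sminus \set{s_1,\dots,s_q}$, so each chosen element dominates each rejected element in $\succeq$.

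For the converse, fix $C \in \mbfC$ that is capacity-filling for $q$ and respects $\succeq$, and fix $\bS \in 2^\X$. If $\card{\bS} \le q$, capacity-filling forces $\card{C(\bS)} = \card{\bS}$, and since $C(\bS) \subset \bS$ this gives $C(\bS) = \bS = C^{q,\succeq}(\bS)$. If $\card{\bS} > q$, enumerate $\bS$ in priority order as $s_1 \succ s_2 \succ \cdots \succ s_{\card{\bS}}$, so $C^{q,\succeq}(\bS) = \set{s_1,\dots,s_q}$. Capacity-filling gives $\card{C(\bS)} = q$. The plan is to rule out $C(\bS) \ne \set{s_1,\dots,s_q}$ by contradiction: if they differ, then since both have the same cardinality, there is some $s_i$ with $i > q$ in $C(\bS)$ and some $s_j$ with $j \le q$ not in $C(\bS)$ (a pigeonhole step). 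But $j < i$ means $s_j \succ s_i$, while $s_i \in C(\bS)$ and $s_j \in \bS \sminus C(\bS)$, contradicting that $C$ respects $\succeq$.

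I do not anticipate a genuine obstacle; the argument is essentially an unpacking of definitions. The only nontrivial moment is the pigeonhole observation in the converse, which leverages the equal cardinality of $C(\bS)$ and $C^{q,\succeq}(\bS)$ provided by capacity-filling to force a priority-violation out of any discrepancy. Consequently, the proof can be kept quite short, and could even be compressed by citing Exercise \ref{ex:fillRespecting} for the converse and only writing out the forward direction in full.
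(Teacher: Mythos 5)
Your proof is correct and is exactly the argument the paper has in mind: the paper itself only asserts the forward direction as ``not hard to see'' and defers the converse to Exercise \ref{ex:fillRespecting}, and your case split on $\card{\bS}\le q$ versus $\card{\bS}>q$ together with the pigeonhole/priority-violation step is the intended filling-in of both halves. (One trivial wording slip: for $m=q$ the chosen element $s_q$ is only weakly, not strictly, above $s_q$; the conclusion that every chosen element strictly dominates every rejected element still follows by transitivity.)
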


A linear order $R$ on $2^\X$ is a \emph{responsive preference over bundles} if for every $\bS \in 2^\X$,
\begin{inparaenum}[(\itshape 1\upshape)]
\item for every $a \in \X$, $\bS \cup \set{a} \mr R \bS$,
\item for every $a, b \in \X \sminus \bS$, $\bS \cup \set{a} \mr R \bS \cup \set{b}$ if and only if $a \mr R b$.
\end{inparaenum}

A choice function is \emph{$q$-responsively rationalized} if there exists a responsive preference $R$ over bundles $2^\X$ and a capacity limit $q \in \Z_+$ such that for every $\bS \in 2^\X$, $C(\bS) = \Top(\beta(\bS, q), R)$, where $\beta(\bS, q) = \set{\bT \subset \bS : \card{\bT} \leq q}$.

\begin{theorem}
A choice function is $q$-responsively rationalized if and only if it is capacity filling for capacity $q$ and respects priorities for some priority ordering $\succeq$.
\end{theorem}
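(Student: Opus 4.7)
The plan is to prove the two directions by leveraging Theorem \ref{thm:fillRespecting}, which identifies the unique choice function that is capacity-filling for $q$ and respects a given priority $\succeq$ as $C^{q,\succeq}$.

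For the \emph{only if} direction, I would start with a $q$-responsive rationalization $R$ of $C$ and define $\succeq$ on $\X$ by $a \succeq b$ iff $\set{a} \mr R \set{b}$, noting this is a linear order because $R$ restricted to singletons is. To show capacity-fillingness I would distinguish two cases: if $\card{\bS} \leq q$, iterating condition (1) of responsiveness forces $\bS$ itself to top $\beta(\bS,q)$ under $R$ (distinct sets get strict preference by antisymmetry), so $C(\bS) = \bS$; if $\card{\bS} > q$, assuming $\card{C(\bS)} < q$ would allow augmenting $C(\bS)$ by some $a \in \bS \sminus C(\bS)$ to produce a strictly $R$-better member of $\beta(\bS,q)$, contradicting optimality. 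For respect of priorities, I would take $a \in C(\bS)$ and $b \in \bS \sminus C(\bS)$, consider the swapped bundle $(C(\bS) \sminus \set{a}) \cup \set{b} \in \beta(\bS,q)$, and apply condition (2) of responsiveness with the common base $\bS_0 = C(\bS) \sminus \set{a}$ (noting $a, b \nin \bS_0$) to conclude $a \succ b$.

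For the \emph{if} direction, given that $C$ is capacity-filling for $q$ and respects $\succeq$, Theorem \ref{thm:fillRespecting} yields $C = C^{q, \succeq}$, so the remaining task is to construct some responsive preference $R$ on $2^\X$ such that $\Top(\beta(\bS, q), R) = C^{q, \succeq}(\bS)$ for every $\bS$. I will define $R$ by cardinality first and then by lexicographic comparison of the $\succeq$-sorted sequence: $\bT \mr R \bT'$ iff $\card{\bT} > \card{\bT'}$, or $\card{\bT} = \card{\bT'}$ and the sorted sequence of $\bT$ weakly lex-dominates that of $\bT'$. This is plainly a linear order. Responsiveness condition (1) is immediate from the cardinality-first tier. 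Condition (2) reduces to a lex comparison of $\bS \cup \set{a}$ and $\bS \cup \set{b}$, whose resolution depends only on how $a$ and $b$ compare under $\succeq$. Finally, the cardinality-first property drives the top of $\beta(\bS,q)$ to have size $k = \min(\card{\bS}, q)$, and among subsets of $\bS$ of this size, lex-maximality selects the $k$ highest-$\succeq$ elements of $\bS$, which matches $C^{q,\succeq}(\bS)$.

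I expect the main obstacle to be verifying condition (2) for this lex order. Although $\bS \cup \set{a} \mr R \bS \cup \set{b} \iff a \succeq b$ is intuitively clear, a direct check requires careful position-by-position bookkeeping: the elements $a$ and $b$ insert into the $\succeq$-sorted sequence of $\bS$ at potentially different positions, so the two sequences under comparison are not single-coordinate swaps. When $a \succ b$, one must identify the first point of divergence and verify that it pits $a$ against either $b$ (when no element of $\bS$ lies strictly between them) or an element of $\bS$ strictly below $a$ under $\succeq$ (otherwise), yielding the required strict lex dominance of $\bS \cup \set{a}$ over $\bS \cup \set{b}$.
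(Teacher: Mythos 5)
Your proof is correct. Note that the paper states this theorem without any proof at all, so there is no in-text argument to compare against; the closest relative is the proof of Theorem \ref{responsive_charac}, which handles the harder (sufficiency) direction by showing that the revealed strict priority relation $\succ^*$ is transitive and then invoking the Szpilrajn extension theorem to manufacture a priority ordering. Your route is the natural direct one for the present statement: for the ``only if'' direction you extract $\succeq$ from $R$ restricted to singletons and verify capacity-filling (via condition (1) plus antisymmetry) and respect of priorities (via the single-swap bundle $(C(\bS)\sminus\set{a})\cup\set{b}$ and condition (2)), and for the ``if'' direction you reduce to $C^{q,\succeq}$ via Theorem \ref{thm:fillRespecting} and exhibit an explicit responsive rationalization ordered first by cardinality and then lexicographically by the $\succeq$-sorted sequence. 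The one step you flag as delicate --- verifying condition (2) for the lex order when $a$ and $b$ insert at different positions --- does go through: the two sorted sequences share the prefix consisting of the elements of $\bS$ above $a$, and at the first divergence $\bS\cup\set{a}$ offers $a$ while $\bS\cup\set{b}$ offers something strictly $\succeq$-below $a$ (either $b$ or an element of $\bS$ below $a$), giving strict lex dominance; similarly, the $\Top$ computation at the end follows from the standard observation that the $j$-th largest element of any $k$-subset of $\bS$ is weakly below the $j$-th largest element of $\bS$. Your construction is self-contained and arguably what the authors intended the reader to supply.
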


Given choice function $C \in \mbfC$, the \emph{revealed strict priority relation} of $C$, denoted $\succ^*$, is defined by $a \succ^* b$ if and only if there exists $\bS \in 2^\X$ such that $a, b \in \bS$, $a \in C(\bS)$, and $b \nin C(\bS)$.
\medskip
We say $C$ satisfies \idf{Weak axiom of revealed strict priority (WARSPrio)}{Weak axiom of revealed strict priority}\index{WARSPrio} if 
for every $a, b \in \X$,
if $a$ is revealed strictly prioritized by choice function $C$ to $b$, then $b$ is not revealed strictly prioritized by $C$ to $a$.
WARSPrio is equivalent to the requirement of asymmetry of revealed strict priority.

\begin{theorem}
	\label{responsive_charac}
	A choice function is $q$-responsively rationalized if and only if it satisfies WARSPrio and is capacity-filling.
\end{theorem}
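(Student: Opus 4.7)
The forward direction is direct. Suppose $C$ is $q$-responsively rationalized. By the preceding characterization of $q$-responsive rationalization, $C$ is capacity-filling for capacity $q$ and respects some priority ordering $\succeq$. Capacity-filling is then immediate. For WARSPrio: if $a \succ^* b$, there is some $\bS$ with $a, b \in \bS$, $a \in C(\bS)$, and $b \nin C(\bS)$; respecting $\succeq$ gives $a \succ b$, so $\succ^* \subseteq \succ$ and $\succ^*$ inherits asymmetry from the strict linear order $\succ$.

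For the backward direction, assume $C$ is capacity-filling (say with capacity $q$) and satisfies WARSPrio. By the same characterization, it suffices to exhibit a priority ordering that $C$ respects. My plan is to show $\succ^*$ is a strict partial order (asymmetric and transitive) and then take any linear extension $\succ$; since $\succ^* \subseteq \succ$, $C$ will automatically respect $\succ$. Asymmetry of $\succ^*$ is exactly WARSPrio, so the work lies in transitivity.

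As an intermediate step I will establish substitutability: if $\bT \subseteq \bS$ and $a \in C(\bS) \cap \bT$, then $a \in C(\bT)$. If $\card{\bT} \leq q$ this is immediate since $C(\bT) = \bT$. Otherwise $\card{C(\bT)} = q$, and I argue by contradiction. Assume $a \nin C(\bT)$. Then every $b \in C(\bT)$ satisfies $b \succ^* a$ via $\bT$. If some such $b$ is outside $C(\bS)$, then $a \succ^* b$ via $\bS$ (since $b \in \bT \subseteq \bS$), violating WARSPrio. Hence $C(\bT) \subseteq C(\bS)$. Since $\card{\bS} \geq \card{\bT} > q$, capacity-filling gives $\card{C(\bS)} = q = \card{C(\bT)}$, so $C(\bT) = C(\bS) \ni a$, a contradiction.

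With substitutability in hand, I prove transitivity of $\succ^*$. Given $a \succ^* b$ via $\bS_1$ and $b \succ^* c$ via $\bS_2$, WARSPrio gives $a \neq c$. If $a \in \bS_2$, then either $a \in C(\bS_2)$, giving $a \succ^* c$ directly from $\bS_2$, or $a \nin C(\bS_2)$, yielding $b \succ^* a$ and contradicting WARSPrio. Otherwise $a \nin \bS_2$, and I consider $T = \bS_2 \cup \set{a}$, which satisfies $\card{T} \geq q + 2$ so $\card{C(T)} = q$. Substitutability applied to $\bS_2 \subseteq T$ forces $c \nin C(T)$ (otherwise $c \in C(\bS_2)$). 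If $a \in C(T)$ then $a \succ^* c$ from $T$; otherwise $C(T) \subseteq T \setminus \set{a} = \bS_2$, and substitutability plus equality of sizes give $C(T) = C(\bS_2)$, so $b \in C(T)$ and $a \nin C(T)$ yield $b \succ^* a$, again contradicting WARSPrio. The main obstacle is extracting transitivity from a hypothesis that a priori only forbids 2-cycles; substitutability---itself a nontrivial consequence of WARSPrio and capacity-filling---is the bridge that makes the size-counting argument on $T = \bS_2 \cup \set{a}$ go through.
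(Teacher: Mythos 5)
Your proof is correct, and its overall strategy is the same as the paper's: establish that the revealed strict priority relation $\succ^*$ is asymmetric (WARSPrio) and transitive, extend it to a linear order by Szpilrajn, and conclude via the characterization of $q$-responsive rationalization as capacity-filling plus respecting a priority ordering. The one organizational difference is that you first isolate substitutability as an explicit consequence of WARSPrio and capacity-filling and then use it to run the transitivity argument on $\bS_2 \cup \set{a}$, whereas the paper works directly on the witnessing set for $a_1 \succ^* a_2$ augmented by $a_3$ and does the same size-counting inline without naming substitutability. Your modularization is a nice touch---it in effect proves the paper's remark, stated just after the theorem, that WARSPrio implies substitutability under capacity-filling---and it makes the case analysis in the transitivity step cleaner; the two arguments are otherwise mirror images (you augment the witness for the second relation, the paper augments the witness for the first). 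No gaps: the size bounds $\card{\bS_2} \geq q+1$ and $\card{C(T)} = q$ are justified, and the final appeal to the preceding characterization theorem is legitimate since the paper states it independently.
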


\begin{proof}
The necessity of capacity-filling (and hence size monotonicity) and WARSPrio is left to the reader; see Exercise \ref{ex_responsive}.
For the sufficiency part, suppose that $C$ satisfies size monotonicity and WARSPrio.

We show that $\succ^*$, the revealed strict priority of $X$, is transitive.
Take $a_1, a_2, a_3 \in \X$ such that $a_1 \succ^* a_2 \succ^* a_3$.
These are necessarily distinct elements.
Let $\bS \subset \X \sminus \set{a_1, a_2}$ such that $a_1 \in C(\bS \cup \set{a_1, a_2})$ and $a_2 \in \bS \sminus C(\bS  \cup \set{a_1, a_2})$, which is well-defined given $a_1 \succ^* a_2$.
Let $\bT = \bS \sminus C(\bS)$.
So we have $a_1 \in C(\bS)$, $a_2 \in \bT$, and for every $b \in C(\bS)$ and $b' \in \bT$, we have $b \succ^* b'$.
Also, WARSPrio implies $a_3 \nin C(\bS)$, since $a_2 \in \bT$ and $a_2 \succ^* a_3$.
If $a_3 \in \bS$, then we immediately have $a_1 \succ^* a_3$.

So suppose $a_3 \nin \bS$.
If there exists $b \in \bS$ such that $b \in C(\bS)$ and $b \nin C(\bS \cup \set{a_3})$ then WARSPrio implies that for every $b' \in \bT$, $b' \nin C(\bS \cup \set{a_3})$.
This in turn implies $a_3 \nin C(\bS \cup \set{a_3})$, since we are given $a_2 \succ^* a_3$.
But then $C(\bS \cup \set{a_3}) \subset \bS \sminus \bT = C(\bS)$.
Since size monotonicity (and so capacity-filling) implies $\card{C(\bS)} \leq \card{C(\bS \cup \set{a_3})}$, $C(\bS \cup \set{a_3}) = C(\bS)$.
This yields $a_1 \succ^* a_3$.

By the Szpilrajn extension theorem, there exists a linear ordering $\succeq$ such that for each $a,b \in \X$ with $a\neq b$, $a\succ^* b$ implies $a\succ b$.
Let $q = \max\set{ \card{C(\bS)} : \bS \in 2^\X}$.
Necessarily, capacity-filling implies capacity-filling for capacity $q$ as defined.
We now show that for each $\bS\in 2^\X$, $C(\bS)$ is obtained by choosing the highest priority elements according to $\succeq$ until the capacity $q$ is reached or no element is left.
If $|S|\leq q$, by capacity-filling, $C(\bS)=\bS$ and the claim trivially holds.
Suppose that $|S|>q$.
Suppose the claim does not hold. By capacity-filling, this means that there exist $a,b\in \bS$ such that $a\in C(\bS)$, $b\notin C(\bS)$, and $a\succeq b$.
The facts that  $a\in C(\bS)$ and $b\notin C(\bS)$ imply $a\succ^*b$, which contradicts $a\succeq b$ by the construction of $\succeq$.
\end{proof}

WARSPrio does not necessarily imply substitutability when capacity filling is not true.
Exercise \ref{ex1} asks for an example of this.
When capacity filling holds true, substitutability is a weaker property than WARSPrio.

\begin{remark}
Given the discussion in section \ref{choice:sec:combModel}, a priority ordering over $\X$ is not a preference over alternatives, which would have to be a relation over $2^\X$.
In particular, WARP speaks to preferences and WARSPrior to priorities, and their relationship to each other for the combinatorial setting is not immediate.
However, if $\card{C(\bS)}=1$ for all $\bS\in 2^\X$, then we might identify elements of $\X$ with alternatives and recognize they have equivalent implications for such $C$.
Note that Theorem \ref{responsive_charac} connects the question of rationalizability by preferences over alternatives to an axiom (WARSPrio) defined in the language of combinatorial choice.
Since in the present setting WARP is equivalent to IRE by Theorem \ref{thm:WARPequivIRE}, WARSPrio is a stronger requirement than WARP, for capacity filling $C$.
\end{remark}

A \emph{priority sequence} is a sequence of priority orderings on $\X$.
In Section \ref{sec:MCchoice} we saw that priority sequences define path independent choice functions under the MC choice rule, where each priority in the sequence identifies its maximizer in the given option set and the chosen set is just the collection of these maximizers.
An interpretation of this rule is that there is no rivalry between the priorities in the sequence, which takes mathematical expression in the fact that the choice defined by the rule is invariant to permutations in the sequencing of a given priority sequence.

A natural cousin to the MC choice rule is the \emph{sequenced priority maximization with rivalry}.
It defines, for each pair $(q, (\succeq_m)_{m=1}^{m=q})$ of a capacity and a sequence of $q$ priority orderings, a choice function $C^{q, (\succeq_m)_1^q} \in \mbfC$ as follows:
if $\card{S} \leq q$, then $C^{q, (\succeq_m)_1^q}(\bS) = \bS$, otherwise
$
C^{q, (\succeq_m)_1^q} (\bS) = \set{s_1, \dots, s_q}
$
where $s_1 = \Top(\bS, \succeq_1)$ and for each $m \in \set{2, \dots, q}$,
$$
s_m = \Top(\bS \sminus \set{s_1, \dots, s_{m-1}}, \succeq_m).
$$
Let $\mbfC^{seq-prio-riv}$ be the set of all choice functions defined through the sequenced priority maximization with rivalry choice rule.

\begin{theorem} \label{choice:thm:seq-prio-riv}
Every $C \in \mbfC^{seq-prio-riv}$ is capacity-filling and satisfies substitutability.
\end{theorem}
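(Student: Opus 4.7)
The plan is to dispatch capacity-filling directly and then attack substitutability by tracking, step by step, how the $\bS$-run and the $\bT$-run of the algorithm relate to one another.

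Capacity-filling is immediate from the definition: when $\card{\bS} \leq q$ the first branch gives $C(\bS) = \bS$ so $\card{C(\bS)} = \card{\bS}$; when $\card{\bS} > q$ the algorithm returns $q$ elements, necessarily distinct since $s_m$ is drawn from $\bS \sminus \set{s_1, \dots, s_{m-1}}$. In both cases $\card{C(\bS)} = \min\set{\card{\bS}, q}$.

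For substitutability, fix $\bT \subset \bS$. The inclusion $C(\bS) \cap \bT \subset C(\bT)$ is trivial when $\card{\bT} \leq q$, since then $C(\bT) = \bT$, so assume $\card{\bT} > q$ and a fortiori $\card{\bS} > q$. Write $(s_m)_1^q$ and $(t_m)_1^q$ for the chosen sequences on $\bS$ and on $\bT$. I will prove by induction on $k$ from $0$ to $q$ the invariant
$$\set{s_1, \dots, s_k} \cap \bT \subset \set{t_1, \dots, t_k},$$
from which the case $k = q$ delivers the desired containment. The cases $s_k \nin \bT$ or $s_k \in \set{t_1, \dots, t_{k-1}}$ are immediate, so the content is in the case $s_k \in \bT \sminus \set{t_1, \dots, t_{k-1}}$, where I aim to show $s_k = t_k$. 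The key step is the containment $\bT \sminus \set{t_1, \dots, t_{k-1}} \subset \bS \sminus \set{s_1, \dots, s_{k-1}}$: an element on the left belongs to $\bS$ because $\bT \subset \bS$, and cannot lie in $\set{s_1, \dots, s_{k-1}}$ without also lying in $\set{t_1, \dots, t_{k-1}}$ by the inductive hypothesis. Because $s_k$ is the $\succeq_k$-maximum on the right-hand side and itself belongs to the left-hand side, it is the $\succeq_k$-maximum there as well, so $s_k = t_k$.

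The main obstacle I anticipate is choosing the right invariant. Once $\bT$ is missing a chosen element of the $\bS$-run, the two runs can diverge at the next step and never reconverge with any $t_k$ equal to a specific $s_j$, so any attempt at a pointwise matching of the two sequences is hopeless. The cumulative set-containment invariant is robust to this divergence because it demands only that every surviving element of $C(\bS)$ appears \emph{somewhere} among the first $k$ picks on $\bT$, which is exactly the bookkeeping preserved by the greedy inductive step.
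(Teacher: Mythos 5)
Your proof is correct and follows essentially the same route as the paper's: both arguments compare the sequential runs on $\bS$ and on $\bT$ and show that any chosen element of $\bS$ surviving in $\bT$ is picked in the $\bT$-run no later than in the $\bS$-run. The paper simply asserts this comparison as an observation (``$a$ is either the maximizer of $\succeq_k$ or of some earlier $\succeq_{k'}$''), whereas your cumulative invariant $\set{s_1,\dots,s_k}\cap\bT\subset\set{t_1,\dots,t_k}$ supplies the inductive bookkeeping that makes that observation rigorous.
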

\begin{proof}
Capacity-filling directly follows from the definition of $\mbfC^{seq-prio-riv}$.
To see substitutability, take any $\bS, T\in 2^\X$ and $a\in E$ such that $a\in T\subseteq \bS$.
Suppose that $a\in C(\bS)$. If $|T|<q$, then trivially $a\in C(\bT)$.
Suppose that $|T|\geq q$.
Then, there exists $k \in \{1,\ldots ,q\}$ such that $a$ is the maximizer of $\succeq_k$ when the maximizers of the priority orderings are computed sequentially at $\bS$ as in the definition of the sequenced priority maximization with rivalry choice rule.
Observe that  when the maximizers of the priority orderings are computed sequentially at $\bT\subseteq \bS$, $a$ is either the maximizer of $\succeq_k$ or the maximizer of $\succeq_{k'}$ for some $k'<k$.
Hence, $a\in C(\bT)$.
\end{proof}

\medskip

In some applications, the set of elements $\X$ comes with the structure $(L, \lambda)$, where $\lambda$ is a map from $\X$ \emph{onto} a set $L$.
This structure induces the partition $\set{\X_l}_{l \in L}$, where $\X_l = \set{a \in \X: \lambda(a) = l}$.

Consider again the problem with a capacity $q$ and a single priority ordering $\succeq$ on $\X$.
A \emph{reserves profile} for partition structure $(L, \lambda)$ is an indexed set $\set{r_l: l \in L}$ of non-negative integers that respects capacity, that is $\sum_{l \in L} r_l \leq q$.
The interpretation we pursue here is that $r_l$ is a reserve of capacity for the elements in $\X_l$.

The \emph{reserves-based priority maximization} choice rule defines, for each triple $(q, (r_t)_{t\in T},\succeq)$ of a capacity, a reserves profile, and a priority ordering, a choice function $C^{q, (r_t)_{t\in T},\succeq}$ such that the chosen set from options $\bS$ is determined by the following two-stage procedure:
\begin{enumerate}
\item[\emph{First Stage:}] For each $l \in L$, the $r_l$-highest priority elements that are in both $\X_l$ and $\bS$ are chosen, with all of them chosen if they number no more than $r_l$.
\item[\emph{Second Stage:}] The residual capacity $q'$ is the original $q$ less the number of elements chosen in the first stage.  From amongst the elements of $\bS$ not chosen in the first stage, the $q'$-highest priority ones are chosen (or all of them if they number no more than $q'$).
\end{enumerate}

Let $\mbfC^{res}$ be the set of all choice functions definable by the reserves-based priority maximization choice rule.

\begin{theorem}\label{thm:resIsSubs}
Each $C \in \mbfC^{res}$ is capacity-filling and satisfies substitutability.
\end{theorem}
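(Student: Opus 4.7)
The plan is to handle capacity-filling by a direct cardinality calculation, and to reduce substitutability to single-element removals by induction on $\card{\bS \sminus \bT}$, then dispatch by case analysis on the role of the removed element. Let $P_1(\bS)$ denote the set of elements selected in the first stage and $P_2(\bS)$ those selected in the second stage, so that $C(\bS) = P_1(\bS) \cup P_2(\bS)$.

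For capacity-filling, the first stage selects $\card{P_1(\bS)} = \sum_{l \in L} \min(r_l, \card{\X_l \cap \bS})$ elements, and the second stage selects $\min(q - \card{P_1(\bS)},\, \card{\bS} - \card{P_1(\bS)}) = \min(q, \card{\bS}) - \card{P_1(\bS)}$ further elements, using $\card{\bS \sminus P_1(\bS)} = \card{\bS} - \card{P_1(\bS)}$. Summing gives $\card{C(\bS)} = \min(q, \card{\bS})$.

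For substitutability, after the induction the remaining task is: if $\bT = \bS \sminus \set{e}$ with $e \neq a$ and $a \in C(\bS)$, show $a \in C(\bT)$. I split on the role of $e$, writing $q' = q - \card{P_1(\bS)}$. (i) If $e \nin C(\bS)$, then $e$ is neither in the top $r_{\lambda(e)}$ of $\X_{\lambda(e)} \cap \bS$ nor in the top $q'$ of $\bS \sminus P_1(\bS)$, so both stages are unaffected. (ii) If $e \in P_2(\bS)$, stage 1 is unchanged and $a$, being in the top $q'$ of $\bS \sminus P_1(\bS)$ with $a \neq e$, remains in the top $q'$ of the smaller residual $\bS \sminus P_1(\bS) \sminus \set{e}$. (iii) If $e \in P_1(\bS)$ with $\card{\X_{\lambda(e)} \cap \bS} \leq r_{\lambda(e)}$, stage 1 simply shrinks by $\set{e}$ and the residual capacity grows by one, so both $a \in P_1(\bS)$ and $a \in P_2(\bS)$ are immediate. (iv) If $e \in P_1(\bS)$ with $\card{\X_{\lambda(e)} \cap \bS} > r_{\lambda(e)}$, the highest-priority element $e'$ of $\X_{\lambda(e)} \cap \bS \sminus P_1(\bS)$ is promoted into $P_1(\bT)$, so $\card{P_1}$ is unchanged.

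The main obstacle is case (iv). If $a \in P_1(\bS)$, a per-type check confirms $a \in P_1(\bT)$ regardless of whether $\lambda(a) = \lambda(e)$. If $a \in P_2(\bS)$, I use that $e'$ is the highest-priority unselected element of $\X_{\lambda(e)} \cap \bS$: when $\lambda(a) = \lambda(e)$, either $a = e' \in P_1(\bT)$ or $e' \succ a$, in which case removing $e' \succ a$ from the residual only decreases the count of elements $\succ a$; when $\lambda(a) \neq \lambda(e)$, that count drops by at most one. In either event $a$ remains in the top $q'$ of $(\bS \sminus P_1(\bS)) \sminus \set{e'}$, so $a \in P_2(\bT)$.
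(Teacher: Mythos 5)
Your proof is correct, but it takes a genuinely different route from the paper's. The paper argues substitutability directly for an arbitrary pair $\bT \subseteq \bS$: it first shows that the stage-one selection is itself a substitutable (sub-)choice function, deduces that the pool of candidates entering stage two shrinks monotonically with the option set, and then concludes by appealing to the same single-priority argument used for Theorem \ref{choice:thm:seq-prio-riv}; this modular ``composition of two substitutable stages'' viewpoint is what the authors then abstract into Theorem \ref{thm:rsuSubsFill}. You instead reduce to single-element deletions by induction on $\card{\bS \sminus \bT}$ and exhaustively track how the deletion perturbs the two stages. Your approach is more mechanical but also more explicit on the one point the paper compresses: when the deleted element was a stage-one selection, either the residual capacity $q'$ grows by one (your case (iii)) or a unique replacement $e'$ is promoted into stage one leaving $q'$ fixed (your case (iv)), and in both situations the stage-two winners survive. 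The paper's phrase ``for the same reason as Theorem \ref{choice:thm:seq-prio-riv}'' silently relies on the facts that $\card{P_1(\bT)} \leq \card{P_1(\bS)}$ (so residual capacity weakly grows) and that the residual pool shrinks, which your case analysis verifies by hand. You also supply the cardinality identity $\card{C(\bS)} = \card{P_1(\bS)} + \min(q,\card{\bS}) - \card{P_1(\bS)} = \min(q,\card{\bS})$ that the paper dismisses as ``straightforward.'' What the paper's route buys is reusability (it generalizes immediately to arbitrary substitutable, capacity-filling stages); what yours buys is a self-contained, fully elementary verification with no reliance on the earlier theorem.
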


\begin{proof}
	Capacity-filling follows straightforwardly from the definition of $\mbfC^{res}$.
	To prove substitutability, take any $\bS, \bT\in 2^\X$ and $a\in \X$ such that $a\in \bT\subseteq \bS$ and suppose that $a\in C(\bS)$.
	Let $l = \lambda(a)$.
	If $\card{\bT} < q$, then trivially $a \in C(\bT)$, so suppose that $\card{T} \geq q$.
	
	First suppose that $a$ is chosen in the first stage of the procedure from $\bS$.
	Then $a$ is one of the top $r_l$ elements from $\X_l \cap \bS$ with respect to $\succeq$.
	Since $\bT \subseteq \bS$, $a$ is one of the top $r_l$ elements also from $\X_l \cap T$ with respect to $\succeq$ and chosen in the first stage of the procedure from $\bT$.
	Hence, $a \in C(\bT)$.
	
	What we have just proved is that the choice function defined by the choices from only the first stage of the procedure satisfies substitutability.
	Since at the second stage, the elements under consideration for choice are exactly those present at the start of the first stage but not chosen, substitutability of the first stage implies that no element in $\bT$ that is chosen in the first stage from the option set $\bS$ will be present at the start of the second stage when the procedure applied to $\bT$.
	That is, the set of elements that could be chosen at the start of the second stage is a monotonic function of the option set at the start of the first stage.
	Then $a \in C(\bT)$ for the same reason as Theorem \ref{choice:thm:seq-prio-riv}, which states that choice functions defined by the priority maximization rule satisfy substitutability.
\end{proof}

Let us illustrate the two classes of choice functions above with an example. 

\begin{example}
	Let $\X=\{1,2,3,4,5\}$ be a set of students. 
	Let $L=\{l,m,h\}$ denote a partition of students into low, medium, and high socioeconomic status, such that $\lambda(5) = l$, $\lambda(1) =\lambda(4)=m$, and $\lambda(2)=\lambda(3) = h$.
	Let $\succeq$ be defined as $1 \succ 2 \succ 3 \succ 4 \succ 5$.
	Let $q=3$.
	
	Let $C_1$ be the choice function in $\mbfC^{seq-prio-riv}$ defined by the sequenced priority maximization with rivalry rule using $(\succeq,\succeq^l,\succeq^m)$, where $\succeq^l$ and $\succeq^m$ are obtained from $\succeq$ by moving the low and medium socioeconomic status students to the top of the priority ordering, respectively.
	That is, $3\succ^l 5 \succ^l 1 \succ^l 2 \succ^l 4$ and $4\succ^m 1 \succ^m 2 \succ^m 3 \succ^m 5$.
	So $C_1(\X)$ is obtained by first choosing the highest priority student, who is student $1$, then from the remaining set, choosing the highest priority $l$ student, who is student $5$, and finally, from the remaining set, choosing the highest priority $l$ student, who is student $4$.
	That is, $C_1(\X)=\{1,4,5\}$.
	Note that the way $C_1$ operates features preference for diversity along with preference for respecting $\succeq$.
	
	Let $C_2$ be the choice function in $\mbfC^{res}$ defined by the reserves-based priority maximization rule $r_l = r_m=1$ and $r_h=0$.
	Note that, for example, $C_2(\X)$ is obtained by first choosing one highest priority $l$ student, who is $5$, and one highest priority $m$ student, who is $1$, and then from the remaining set, choosing the highest priority student, who is $2$. 
	That is, $C_2(\X)=\{1,2,5\}$.
	Note that also the way $C_2$ operates features preference for diversity along with preference for respecting $\succeq$.
\end{example}

This example illustrates how choice rules incorporating diversity considerations into choice behavior can differ in the choices made even in a simple case.
In the context of school choice, Chapter ``School Choice" studies this kind of issue.

\medskip

We close this section by addressing the pattern of results in the choice function design approach we have explored in this section.
From examining the proof of Theorem \ref{thm:resIsSubs}, it can be seen that a more general result could be shown by adapting the arguments made.

The \emph{two-stage selection with rivalry choice rule} $\mc H$ is defined for each pair of choice functions $(C_1, C_2)$ by the following choice procedure applied at each option set $\bS$:
\begin{enumerate}
\item[\emph{First Stage:}] Choose those elements in $\bS$ that $C_1$ would choose.
\item[\emph{Second Stage:}] From amongst the elements of $\bS$ not chosen in the first stage, choose those elements that $C_2$ would choose.
\end{enumerate}
For each pair of choice functions $(C_1, C_2)$, let $\mc H[C_1 \rsu C_2]$ be the choice function defined by this rule.

\begin{theorem}\label{thm:rsuSubsFill}
If $C_1$ and $C_2$ are capacity-filling choice functions that satisfy substitutability, then $\mc H[C_1 \rsu C_2]$ is capacity filling and satisfies substitutability.
\end{theorem}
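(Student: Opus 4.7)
The plan is to mimic the structure of the proof of Theorem \ref{thm:resIsSubs}, but with general capacity-filling substitutable $C_1, C_2$ in place of the specific first-stage reserves rule and second-stage priority rule.

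First, I would verify capacity-filling by direct counting. Let $q_1, q_2$ be the capacities of $C_1, C_2$. For any option set $\bS$, the first stage contributes $\min\{|\bS|, q_1\}$ elements, so the residual pool $\bS \setminus C_1(\bS)$ has $\max\{0, |\bS| - q_1\}$ elements, from which the second stage selects $\min\{\max\{0, |\bS|-q_1\}, q_2\}$ elements. Adding these gives $\min\{|\bS|, q_1+q_2\}$ for all three cases ($|\bS| \le q_1$, $q_1 < |\bS| \le q_1+q_2$, $|\bS| > q_1+q_2$), so $\mc H[C_1 \rsu C_2]$ is capacity-filling for capacity $q_1+q_2$.

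For substitutability, fix $\bT \subset \bS$ and $a \in \bT$ with $a \in \mc H[C_1 \rsu C_2](\bS)$; I want $a \in \mc H[C_1 \rsu C_2](\bT)$. Split into cases based on which stage selected $a$ at $\bS$. If $a \in C_1(\bS)$, then substitutability of $C_1$ gives $a \in C_1(\bT)$, so $a$ is picked up in the first stage at $\bT$. The more delicate case is $a \in C_2(\bS \setminus C_1(\bS))$; here I must produce $a$ at $\bT$, possibly from the second stage. If $a \in C_1(\bT)$, we are done, so assume $a \notin C_1(\bT)$, which combined with $a \in \bT$ puts $a$ in $\bT \setminus C_1(\bT)$.

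The key lemma to establish is the monotonicity of the residual pool: $\bT \setminus C_1(\bT) \subset \bS \setminus C_1(\bS)$. This follows from substitutability of $C_1$ via contrapositive: any $x$ in the left-hand side lies in $\bS$, and if $x$ were in $C_1(\bS)$ then substitutability (with $\bT \subset \bS$ and $x \in \bT$) would force $x \in C_1(\bT)$, contradicting $x \notin C_1(\bT)$. With this inclusion in hand, I can apply substitutability of $C_2$ to the pair $\bT \setminus C_1(\bT) \subset \bS \setminus C_1(\bS)$ with $a \in \bT \setminus C_1(\bT)$ and $a \in C_2(\bS \setminus C_1(\bS))$ to obtain $a \in C_2(\bT \setminus C_1(\bT))$, placing $a$ in the second-stage selection at $\bT$ and completing the proof.

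The main obstacle is exactly the monotonicity-of-residual-pool step, since naïvely one only knows $\bT \subset \bS$ and has no direct containment between $C_1(\bT)$ and $C_1(\bS)$; the resolution is that while the chosen sets need not be comparable, the rejected sets are, precisely because substitutability is a statement about preservation of chosen elements when the option set shrinks.
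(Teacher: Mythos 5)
Your proof is correct and follows exactly the route the paper intends: Theorem \ref{thm:rsuSubsFill} is left as Exercise \ref{ex:rsuSubsFill}, which asks for precisely this adaptation of the proof of Theorem \ref{thm:resIsSubs}, with your residual-pool monotonicity lemma being the ``monotone rejection'' reformulation of substitutability from Exercise \ref{choice:ex:equivSubs}. Your write-up is in fact a bit more explicit than the paper's sketch for the reserves case, particularly in cleanly separating the subcase where $a$ is promoted to the first stage at $\bT$ from the subcase where it survives in the second stage.
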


The proof is left as Exercise \ref{ex:rsuSubsFill}.

\section{Choice and Deferred Acceptance}
\label{choice:sec:choiceAndDA}

In one-to-many matching applications where agents are matched with objects, combinatorial choice rules associated with objects to determine matchings are intimately tied with \begin{inparaenum}[(i)]\item the relationships between different stability notions, \item the relationships between different formulations of the deferred acceptance algorithm, and \item the stability properties of the algorithm's outcomes.\end{inparaenum}
We illustrate this in a simple one-to-many matching model.
Many of these results hold for general models of matching with contracts, covered in Chapter ``Generalized Matching".

There is a set of agents $A$ and a set of objects $O$.
Each agent $i\in A$ has a preference relation $R_i$ over $O\cup \{\emptyset\}$ that is complete, transitive, and anti-symmetric, where $\emptyset$ denotes being unmatched (or the outside option).
We also write $a \mr P_i b$ if $a \mr R_i b$ and $a\neq b$.
An object $a\in O$ is \df{acceptable} for $i$ if $a \mr R_i \emptyset$.
Each object $a\in O$ has a choice rule $C_a: 2^A\rightarrow 2^A$ which associates with each set of applicants $\bS \subseteq A$, a nonempty set of chosen agents $C_a(\bS)\subseteq \bS$ with the interpretation that each chosen agent receives a copy of the object.

A matching $\mu$ is an assignment of objects to agents such that each agent receives at most one object.
When agent $i\in A$ is matched with object $a\in O$, we write $\mu(i)=a$ and $i\in \mu(a)$.
An agent $i\in A$ might also be unmatched at $\mu$, which we denote by $\mu(i)=\emptyset$. 

\subsection{Stability}

A central property for matchings is \df{stability}, which roughly requires that the matching is robust to agents possibly acting to circumvent the match.
Even in the context of centralized resource allocation, where the objects are not agents in the usual sense, stability has normative foundations, as discussed in the introduction of the chapter.
Below are some alternative formulations of stability.

A matching $\mu$ is \idf{individually stable}{individual stability} if it is acceptable to every agent $i \in A$ and for every object $a \in O$,  $C_a(\mu(a)) = \mu(a)$.
A matching $\mu$ is \df{$\alpha$-stable} if it is individually stable and there does not exist an agent $i\in A$ and an object $a\in O$  such that $a \mr P_i  \mu(i)$ and $i \in C_a(\mu(a) \cup \{i\})$.
A matching $\mu$ is \df{$\beta$-stable} if it is individually stable and there does not exist an agent $i\in A$ and an object $a\in O$  such that $a \mr P_i  \mu(i)$ and $C_a(\mu(a) \cup \set{i}) \neq \mu(a)$.

Note the difference between the above two versions of stability: in the $\alpha$-version, the agent $i$ who approaches $a$ must be chosen by $a$ in order to successful circumvent the matching , while in the $\beta$-version, it is sufficient if the initial matching of $a$ is disrupted.
Under IRE, these two versions are equivalent.

\begin{lemma}
Without any assumptions on objects' choice rules, $\beta$-stability implies $\alpha$-stability, but not vice versa.
If objects' choice rules satisfy IRE, then $\alpha$-stability is equivalent to $\beta$-stability.
\end{lemma}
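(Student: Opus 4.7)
The plan is to handle the three claims separately: (i) $\beta$-stability $\Rightarrow$ $\alpha$-stability in general, (ii) the converse fails in general (by counterexample), and (iii) under IRE the two are equivalent, which only requires showing the converse direction under IRE.

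For (i), I would proceed by contrapositive: assume $(i,a)$ is an $\alpha$-block of some individually stable $\mu$, and show it is also a $\beta$-block. Since $a \mr P_i \mu(i)$, we have $i \notin \mu(a)$ (otherwise $\mu(i) = a$). The $\alpha$-block gives $i \in C_a(\mu(a) \cup \{i\})$, but $i \notin \mu(a)$, so $C_a(\mu(a) \cup \{i\}) \neq \mu(a)$, witnessing a $\beta$-block.

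For (ii), I would construct a small example where $C_a$ violates IRE. Take $A = \{i,j,k\}$, $O = \{a\}$, a matching $\mu$ with $\mu(a) = \{j,k\}$ and $\mu(i) = \emptyset$, with $i$ preferring $a$ to $\emptyset$, and choice rule defined by $C_a(\{j,k\}) = \{j,k\}$ and $C_a(\{i,j,k\}) = \{j\}$ (with remaining values filled in arbitrarily to keep choices nonempty, and with $C_a(\{j\})=\{j\}$, $C_a(\{k\})=\{k\}$, etc. so that individual stability holds). Then $C_a(\mu(a) \cup \{i\}) = \{j\} \neq \{j,k\} = \mu(a)$, so $\mu$ is $\beta$-blocked by $(i,a)$. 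But $i \notin C_a(\mu(a) \cup \{i\})$, so $(i,a)$ is not an $\alpha$-block; one checks no other pair is an $\alpha$-block either. Note IRE is indeed violated here since $C_a(\{i,j,k\}) = \{j\} \subseteq \{j,k\} \subseteq \{i,j,k\}$ yet $C_a(\{j,k\}) \neq \{j\}$.

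For (iii), given (i), I only need to show $\alpha$-stability $\Rightarrow$ $\beta$-stability under IRE. Suppose $\mu$ is $\alpha$-stable and, for contradiction, $\beta$-blocked by $(i,a)$, so $a \mr P_i \mu(i)$ and $C_a(\mu(a) \cup \{i\}) \neq \mu(a)$. By $\alpha$-stability, $i \notin C_a(\mu(a) \cup \{i\})$, so $C_a(\mu(a) \cup \{i\}) \subseteq \mu(a) \subseteq \mu(a) \cup \{i\}$. Applying IRE to $\bS = \mu(a) \cup \{i\}$ and $\bT = \mu(a)$ yields $C_a(\mu(a) \cup \{i\}) = C_a(\mu(a)) = \mu(a)$ (using individual stability in the last equality), a contradiction.

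The only nontrivial step is choosing the counterexample in (ii), but the structural hint is clear: the failure must be witnessed by some element of $\mu(a)$ being ejected by the presence of $i$ without $i$ itself being chosen, which is exactly the phenomenon IRE rules out.
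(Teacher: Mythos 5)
Your proof is correct and follows essentially the same route as the paper: the forward implication uses $i \notin \mu(a)$ to turn an $\alpha$-block into a $\beta$-block, and the IRE direction applies irrelevance of rejected elements to $C_a(\mu(a)\cup\{i\}) \subseteq \mu(a) \subseteq \mu(a)\cup\{i\}$ together with individual stability. The only difference is that you supply an explicit three-agent counterexample for the failure of the converse, whereas the paper merely asserts that one exists; your example is valid (and note that with the nonemptiness requirement on $C_a$ a two-agent, one-object example would not actually work, since the only nonempty subset of $\{i,j\}$ excluding $i$ is $\{j\}$ itself).
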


\begin{proof}
We can simply assume individual stability holds.
On one hand, if $\mu$ is not $\alpha$-stable, 
then there exist $i \in A$ and $a \in O$ such that $a \mr P_i  \mu(i)$ and $i \in C_a(\mu(a) \cup \set{i})$, which implies that $C_a(\mu(a) \cup \set{i}) \neq \mu(a)$.
Hence, $\mu$ is not $\beta$-stable. 
On the other hand, if $\mu$ is not $\beta$-stable,
then there exist $i \in A$ and $a\in O$ such that $a \mr P_i  \mu(i)$ and $C_a(\mu(a) \cup \set{i}) \neq \mu(a)$.
If $C_a$ satisfies IRE, then $i \in C_a(\mu(a) \cup \{i\})$, and so $\mu$ is not $\alpha$-stable.

An example with two agents and one object can be constructed to show that the equivalence fails without IRE, even with substitutability satisfied.
\end{proof}
	
A matching $\mu$ is \df{group stable} if it is individually stable and there does not exist a nonempty set of agents $\bS \subset A$ and an object $a \in O$ such that $a \mr P_i  \mu(i)$ for each $i \in \bS$, and $\bS \subseteq C_a(\mu(a) \cup \bS)$.

\begin{lemma}
If objects' choice rules satisfy substitutability, $\alpha$-stability is equivalent to group stability.
\end{lemma}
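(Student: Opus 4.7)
The plan is to prove the two implications separately, with one being immediate and the other being the substantive direction that uses substitutability.

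First I would handle the direction that group stability implies $\alpha$-stability without needing any assumption on the choice rules. Suppose $\mu$ fails $\alpha$-stability, so there exist $i \in A$ and $a \in O$ with $a \mr P_i \mu(i)$ and $i \in C_a(\mu(a) \cup \set{i})$. Take the singleton $\bS = \set{i}$. Then $a \mr P_j \mu(j)$ for every $j \in \bS$ and $\bS \subseteq C_a(\mu(a) \cup \bS)$, which exhibits a violation of group stability. (Individual stability transfers between the two notions trivially.)

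For the substantive direction, I would show the contrapositive: if $\mu$ is not group stable, then it is not $\alpha$-stable. Suppose there exists a nonempty coalition $\bS \subseteq A$ and an object $a \in O$ with $a \mr P_i \mu(i)$ for every $i \in \bS$ and $\bS \subseteq C_a(\mu(a) \cup \bS)$. Pick any $i \in \bS$. The key step is to apply substitutability of $C_a$ with the choice of sets $\bS' = \mu(a) \cup \bS$ (playing the role of the larger set) and $\bT = \mu(a) \cup \set{i}$ (playing the role of the subset). Since $i \in \bS$, the inclusion $\bT \subseteq \bS'$ holds. By hypothesis, $i \in C_a(\bS')$, and of course $i \in \bT$, so $i \in C_a(\bS') \cap \bT$. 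Substitutability then yields $i \in C_a(\bT) = C_a(\mu(a) \cup \set{i})$. Together with $a \mr P_i \mu(i)$, this witnesses a violation of $\alpha$-stability.

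The only mild subtlety is checking that individual stability carries through in both directions, but this is immediate because group stability and $\alpha$-stability both include it by definition. I do not expect any real obstacle: substitutability is precisely the property that lets a single member of a blocking coalition survive the ``thinning'' down to a pair, which is exactly what differentiates $\alpha$-stability from group stability.
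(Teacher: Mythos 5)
Your proof is correct and follows essentially the same route as the paper's: the paper likewise notes the reverse direction is immediate and, for the substantive direction, picks any $i$ in the blocking coalition $\bS$ and applies substitutability to conclude $i \in C_a(\mu(a) \cup \{i\})$. You merely spell out the substitutability step (identifying the nested sets explicitly) in more detail than the paper does.
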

\begin{proof}
We can simply assume individual stability holds.
If $\mu$ is not group stable,
then there exist a nonempty set of agents $\bS \subset A$ and an object $a$ such that $a \mr P_i  \mu(i)$ for each $i\in \bS$, and $\bS \subseteq C_a(\mu(a) \cup \bS)$.
Take any $i\in S$.
By substitutability, $i \in C_a(\mu(a) \cup \set{i})$, so $\mu$ is not $\alpha$-stable.
The reverse direction is immediate.
\end{proof}

Thus, substitutability and IRE give us a single natural theory of stability.
\begin{corollary}
If objects' choice rules satisfy substitutability and IRE, then $\alpha$-stability, $\beta$-stability, and group stability are equivalent.
\end{corollary}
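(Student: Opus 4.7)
The corollary is a direct consequence of the two lemmas immediately preceding it, so the plan is to simply assemble their conclusions under the joint hypothesis of substitutability and IRE. No new machinery is required, and I do not anticipate any genuine obstacle; the only care needed is to make the logical chain transparent and to note which hypothesis is doing which work.

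The plan is as follows. First, I would invoke the previous lemma on $\alpha$-versus-$\beta$-stability: since every choice rule $C_a$ satisfies IRE by assumption, we obtain that $\alpha$-stability and $\beta$-stability are equivalent for every matching $\mu$. Second, I would invoke the lemma on $\alpha$-versus-group-stability: since every $C_a$ satisfies substitutability by assumption, $\alpha$-stability and group stability are equivalent. Chaining these two equivalences through $\alpha$-stability as the common pivot then yields that all three notions coincide.

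The only subtlety worth flagging in the write-up is that the three definitions share the same individual-stability clause, so the equivalences reduce to comparing the blocking clauses, which is exactly what the two lemmas do. Since both lemmas were proved without any additional restriction on the matching $\mu$, no further argument is needed: the corollary follows immediately by the transitivity of biconditional. I would therefore present the proof in one or two sentences, explicitly citing the two lemmas in turn, rather than repeating their arguments.
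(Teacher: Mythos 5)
Your proposal is correct and matches the paper's intent exactly: the corollary is stated without proof precisely because it follows by chaining the IRE lemma ($\alpha \Leftrightarrow \beta$) with the substitutability lemma ($\alpha \Leftrightarrow$ group) through $\alpha$-stability as the pivot. Nothing further is needed.
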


\subsection{Deferred Acceptance}
\label{sec:DefAcc}

Below are two versions of the deferred acceptance algorithm.
There is only one difference between the two algorithms: the choice-keeping deferred acceptance (CK-DA) algorithm only keeps the proposals of the set of chosen applicants from the previous step, while the applicant-keeping version (AK-DA) keeps all proposals received.
The AK-DA algorithm is often called the cumulative offer algorithm (with simultaneous proposals).

\begin{algorithm}[ht]
	\DontPrintSemicolon
	\SetAlgoLined
	\SetKwData{Active}{Active}
	\SetKwFunction{Available}{AvailableTo}
	\SetKwData{Top}{Top}
	\SetKwData{Proposers}{Proposers}
	\SetKwFunction{Pool}{KeptProposals}
	\SetKwFunction{Held}{Held}
	\SetKwFunction{ChoiceFunction}{ChoiceFunction$_o$}
	\SetKwData{Rejected}{Rejected}
	\SetKwData{Chosen}{Chosen}
	\footnotesize
	initialize: every agent $i$ is \Active and every object is \Available{i}, \;
	\While{some agent is \Active}{
		Each \Active agent $i$ proposes to \Top object \Available{i} \;
		\For{each object $o$}{
			\Chosen = \ChoiceFunction{\Proposers $\cup$ \Pool{o}} \;
			Agents not \Chosen are \Rejected if they are \Proposers or \Held{o} \;		
			Each \Rejected agent $i$ no longer has $o$ $\Available{i}$ \;
			Each \Rejected agent $i$ with one or more objects $\Available{i}$ is set as \Active; every other agent is no longer \Active \;
			\Held{o} $\leftarrow$ \Chosen \;
			\If {\normalsize Choice-Keeping DA}{
		\nl		\Pool{o} $\leftarrow$ \Chosen \;
			}
			\If {\normalsize Applicant-Keeping DA}{
		\nl		\Pool{o} $\leftarrow$ \Proposers $\cup$ \Pool{o} \;
			}
		}
	}
	return: each object is assigned to each agent it holds\;
\caption{Deferred acceptance algorithms, CK-DA and AK-DA, differing only at line numbers 1 and 2}
\end{algorithm}

\begin{lemma}
Both CK-DA and AK-DA algorithms stop in finitely many steps without any assumptions on objects' choice rules.
\end{lemma}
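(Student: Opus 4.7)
The plan is to exhibit a monotone, bounded, integer-valued potential function whose strict decrease is forced whenever the while loop does not terminate. Define
$$\Phi = \sum_{i \in A} \card{\text{AvailableTo}(i)}.$$
Initially $\Phi = \card{A}\cdot\card{O}$, which is finite, and $\Phi \geq 0$ always. Note that availability is shared between the two variants of the algorithm: the only line that modifies \text{AvailableTo} is ``Each Rejected agent $i$ no longer has $o$ Available to $i$,'' which is identical for CK-DA and AK-DA. Hence in both algorithms $\Phi$ is non-increasing across iterations of the outer while loop.

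Next I would argue that if a given iteration of the while loop does not terminate the loop, then $\Phi$ strictly decreases during that iteration. The loop continues only if some agent is Active at the end of the iteration. By the update rule, an agent is marked Active only if it was Rejected by some object $o$ and still has at least one object in AvailableTo. But by definition of Rejected, such an agent $i$ had $o$ removed from its availability in this iteration, so at least one pair $(i,o)$ is eliminated from the sum defining $\Phi$. Therefore each non-terminating iteration strictly decreases $\Phi$ by at least $1$, and since $\Phi$ is bounded below by $0$ there can be at most $\card{A}\cdot\card{O}$ such iterations before the loop exits. After exit, the algorithm performs the single return step, so in both variants it halts after finitely many steps.

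The main obstacle is verifying the claim that ``Active at the end of an iteration implies some rejection occurred in that iteration,'' because the specifications of CK-DA and AK-DA differ in what is retained in the pool. In CK-DA this is immediate because only the previously chosen agents are in the pool and any newly active agent must have been among the rejected Proposers. For AK-DA one must also handle the case where a previously Held agent is displaced when the recomputed choice on $\text{Proposers}\cup\text{KeptProposals}(o)$ no longer contains them; the line ``Agents not Chosen are Rejected if they are Proposers or Held'' classifies such displacement as rejection, and the subsequent availability update removes $o$ from that agent's AvailableTo set. With this point checked, the potential argument goes through uniformly, establishing finite termination for both algorithms without any assumption on the choice rules $C_o$.
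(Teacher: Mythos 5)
Your proof is correct and follows essentially the same route as the paper's (brief) argument: the paper notes that each non-terminating round forces a rejection, which shrinks some agent's set of available objects, and finiteness of $A$ and $O$ then bounds the number of rounds. Your potential function $\Phi=\sum_{i\in A}\card{\text{AvailableTo}(i)}$ is just a clean formalization of that same observation, and you correctly verify the one delicate point (that every agent marked Rejected, whether a Proposer or Held, genuinely loses an available object in that round).
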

This lemma holds since both algorithms terminate when there are no active agents and in each iteration of the proposal phase, either the set of active agents shrinks or an agent's proposal is rejected, shrinking his set of available objects.  
With finite sets of agents and objects, the algorithms terminate.

Do CK-DA and AK-DA algorithms necessarily produce a feasible outcome, that is, a matching?
\begin{lemma}	\label{lemma_feasible_matching}
The CK-DA algorithm produces a matching (a feasible outcome) without any assumptions on objects' choice rules.
The AK-DA algorithm produces a matching if objects' choice rules are substitutable; otherwise, the AK-DA algorithm might not produce a matching.
\end{lemma}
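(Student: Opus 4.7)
The plan is to establish an invariant---\emph{each agent is held by at most one object at the end of every iteration}---by induction on the iteration count, separately for each algorithm, and then construct a small counterexample for AK-DA without substitutability.

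For CK-DA, the key observation is that in this variant KeptProposals$(o)$ equals Chosen$(o)$ from the previous step, hence equals the previous Held$(o)$. So the new Held$(o)$ at step $t{+}1$ is a subset of (previous Held$(o)$) $\cup$ (Proposers at step $t{+}1$). For the inductive step, I would split an agent $i$ in the new Held$(o)$ into two cases. If $i$ is a Proposer at step $t{+}1$, then $i$ is Active entering step $t{+}1$, so by the inductive hypothesis (and the algorithm's Active/Held bookkeeping) $i$ is not Held by any object at the end of step $t$, and since agents propose to only one object, $i$ is in no other new Held set. If $i$ was in the previous Held$(o)$, the inductive hypothesis already gives that $i$ was Held by no other $o'$, and since $i$ was not Active at step $t{+}1$, $i$ did not propose anywhere, so $i$ is in no other new Held set.

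For AK-DA with substitutable choice rules, the argument hinges on a \emph{monotone rejection lemma}: if $S_t$ denotes the input set $\text{Proposers} \cup \text{KeptProposals}$ at object $o$ during step $t$, then $S_t \subseteq S_{t+1}$ in AK-DA (since Pool only grows), and substitutability gives $C_o(S_{t+1}) \cap S_t \subseteq C_o(S_t)$. By contrapositive and induction, once $i \in S_t$ and $i \notin C_o(S_t)$, then $i \notin C_o(S_{t'})$ for every $t' \geq t$. Hence an agent rejected by $o$ never reappears in Held$(o)$. Combined with the fact that an agent only enters Held$(o)$ by proposing, any $i$ currently in Held$(o)$ has been continuously held by $o$ since the step of their proposal to $o$, and so was never Active in between and never proposed to any $o' \neq o$. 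The same inductive invariant then goes through.

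For the ``otherwise'' direction, I would produce a small counterexample of three agents $a,b,c$ and two objects $o_1, o_2$, with $C_{o_1}$ engineered so that $C_{o_1}(\{a,c\}) = \{c\}$ but $C_{o_1}(\{a,b,c\}) = \{a,c\}$ (non-substitutable: adding $b$ restores $a$). With $a: o_1 \succ o_2$, $b: o_2 \succ o_1$, $c: o_1$, and $C_{o_2}(\{a,b\}) = \{a\}$, tracing AK-DA should yield: step 1 rejects $a$ from $o_1$ (so $a$ drops $o_1$); step 2 has $a$ propose to $o_2$, gets held there, $b$ rejected from $o_2$; step 3 has $b$ propose to $o_1$, which by non-substitutability recalls $a$ into $C_{o_1}$, so Held$(o_1)$ now contains $a$---simultaneously with Held$(o_2)$ containing $a$. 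Not a matching.

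The main obstacle is the bookkeeping in the AK-DA case: the Pool retains agents who have been rejected, so one must carefully separate ``still in $S_t$'' from ``still Held'', and the monotone rejection lemma is exactly what bridges the gap. Verifying the counterexample is a matter of careful tracing, but choosing the non-substitutable values of $C_{o_1}$ so that the failure occurs without spurious chain reactions requires some care.
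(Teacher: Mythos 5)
Your proposal is correct and follows essentially the same route the paper indicates: the paper omits the proof but explicitly attributes AK-DA's feasibility to the monotone-rejection consequence of substitutability (Exercise \ref{choice:ex:equivSubs}) and describes precisely the failure mode you instantiate (an agent rejected by an object, later recalled from the expanded proposal pool while held elsewhere). Your CK-DA invariant, the once-rejected-always-rejected argument for AK-DA, and the three-agent/two-object counterexample (which does trace to $a$ being held by both $o_1$ and $o_2$ at termination) all check out.
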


We omit the relatively simple proof.	
However, it is worth understanding the role of substitutability in ensuring a feasible outcome for the AK-DA algorithm.
Without substitutability, it is possible that a proposal by agent $i$ is rejected by an object $a$ but at a later step, because of other proposals to $a$ following $i$'s rejection, $i$ is chosen from the necessarily expanded set of proposals.
In the interim, $i$ may have proposed to another object and be held by it.
If this situation continues to hold at algorithm termination, the outcome is not a matching because it is infeasible for $i$.
With substitutability, the choice functions have the monotone rejection property (see Exercise \ref{choice:ex:equivSubs}) on expanding sets, thereby ensuring a feasible outcome as the set of proposals to each object expands in running of the AK-DA algorithm.

What conditions on objects' choice rules imply that the CK-DA and AK-DA algorithms outcome-equivalent?
Substitutability is not sufficient for this equivalence; see Exercise \ref{exIRE}.
For this equivalence, we need IRE in addition to substitutability.

\begin{lemma}
If objects' choice rules satisfy IRE and substitutability, then the CK-DA and AK-DA algorithms are outcome-equivalent; in fact, after each round of agents proposing and objects choosing, the tentatively accepted (i.e. held) proposals coincide.

If objects' choice rules fail substitutability or IRE, then the outcomes of the CK-DA and AK-DA algorithms might differ.
\end{lemma}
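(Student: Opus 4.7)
The plan is to prove the two claims separately. For the positive claim, I would proceed by induction on the round number $t$, strengthening the statement to: at the end of round $t$, both the Held set at every object and the $\mathtt{AvailableTo}$ relation (equivalently, the cumulative set of agents who have been rejected by each object) coincide across the two algorithms. The base case $t=0$ is immediate, since initially all agents are Active, every object is available to every agent, and the Held, Chosen, and KeptProposals sets are empty. For the inductive step, coincidence of $\mathtt{AvailableTo}$ at the end of round $t$ guarantees that each Active agent proposes to the same top-available object at the start of round $t+1$, so the set of Proposers to every object $o$, call it $P$, is identical in both algorithms.

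The crux of the induction is showing that the Chosen sets at round $t+1$ agree at every object $o$. In CK-DA, the pool equals the previous Held set, so $\mathtt{Chosen}^{CK}_{t+1}(o) = C_o(P \cup H_t(o))$. In AK-DA, the pool is the cumulative proposal set $K_t(o)$, and the inductive hypothesis (applied to the prior round, together with the update rule $H_t(o) = \mathtt{Chosen}^{AK}_t(o) = C_o(K_t(o))$) gives $H_t(o) = C_o(K_t(o))$. Thus, it suffices to verify the identity
\[ C_o(P \cup C_o(K_t(o))) = C_o(P \cup K_t(o)). \]
This is precisely path independence. By Theorem \ref{thm:PI=SubsIRE}, substitutability and IRE together are equivalent to path independence, and path independence (together with the idempotence it implies, $C_o \circ C_o = C_o$) yields
\[ C_o(P \cup K_t(o)) = C_o(C_o(P) \cup C_o(K_t(o))) = C_o(P \cup C_o(K_t(o))). \]
So Chosen, Held, and the newly-Rejected agents all coincide at round $t+1$, closing the induction. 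Outcome equivalence at termination follows since the algorithms halt in the same round with identical Held sets.

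For the second claim, I would rely on concrete counterexamples. If substitutability fails, then by Lemma \ref{lemma_feasible_matching} the AK-DA algorithm may fail to return a matching at all, whereas CK-DA always does, so the outcomes differ. If IRE fails but substitutability holds, the reader is referred to Exercise \ref{exIRE}, whose construction exhibits a small instance in which the AK-DA algorithm's cumulative pool contains previously rejected proposers that, upon re-examination under the non-IRE choice rule, alter later selections in a way CK-DA (which forgets rejected proposers) cannot reproduce.

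The main obstacle is identifying the correct inductive hypothesis — in particular, recognizing that one must track not only the Held sets but also the $\mathtt{AvailableTo}$ relation so that the Proposer sets coincide from round to round — and then seeing that the required step-by-step reconciliation between the two pool-maintenance rules is exactly the content of path independence rather than either of its two constituent axioms in isolation.
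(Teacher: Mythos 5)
Your proof is correct, and since the paper states this lemma without providing any proof, yours fills the gap with exactly the argument the authors telegraph in the statement itself (``after each round \ldots the tentatively accepted proposals coincide'' is precisely your inductive invariant). The key identity $C_o(P \cup C_o(K_t(o))) = C_o(P \cup K_t(o))$ is indeed just path independence in the form $C(S_1\cup S_2)=C(C(S_1)\cup S_2)$ of Exercise \ref{exPIvariants}, available via Theorem \ref{thm:PI=SubsIRE}; your two-step derivation of it (path independence plus idempotence) is valid. For the negative half you lean on Lemma \ref{lemma_feasible_matching} and Exercise \ref{exIRE} rather than exhibiting concrete instances, which mirrors the paper's own treatment but would need to be made explicit in a fully self-contained write-up.
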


We next investigate the stability properties of the two deferred acceptance algorithms.

\begin{lemma}
If objects' choice rules satisfy IRE and substitutability, then the outcome of the CK-DA algorithm is a stable matching.
If objects' choice rules satisfy IRE, then the outcome of the AK-DA algorithm is a stable matching.
\end{lemma}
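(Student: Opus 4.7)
The plan is to prove stability of the AK-DA outcome first, using only IRE, and then obtain the CK-DA claim by invoking the preceding outcome-equivalence lemma. The structural fact that drives the AK-DA analysis is that $\text{KeptProposals}(a)$ never shrinks, so at termination it equals the set $P_a$ of all agents that ever proposed to $a$ during the run, and the final held set satisfies $\mu(a) = C_a(P_a)$ (I take feasibility as given, which is ensured by the earlier feasibility lemma under substitutability).

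For the AK-DA case I would verify individual stability and $\alpha$-stability separately. Because each agent only proposes down their preference list and only to acceptable objects, every $\mu(i)$ is acceptable to $i$. For each object $a$, the inclusions $C_a(P_a) = \mu(a) \subseteq \mu(a) \subseteq P_a$ combined with IRE (applied with $\bS = P_a$, $\bT = \mu(a)$) give $C_a(\mu(a)) = C_a(P_a) = \mu(a)$, completing individual stability. For $\alpha$-stability, suppose $a \mr P_i \mu(i)$; since $\mu(i) \mr R_i \emptyset$, object $a$ is acceptable to $i$, and the proposal rule forces $i$ to have proposed to $a$ before settling at $\mu(i)$ (or before being rejected by the last acceptable object, if $\mu(i)=\emptyset$), so $i \in P_a$. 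Then $\mu(a) \cup \set{i} \subseteq P_a$, and applying IRE to $C_a(P_a) = \mu(a) \subseteq \mu(a) \cup \set{i} \subseteq P_a$ yields $C_a(\mu(a) \cup \set{i}) = C_a(P_a) = \mu(a)$, whence $i \nin C_a(\mu(a) \cup \set{i})$. The earlier $\alpha \Leftrightarrow \beta$ lemma under IRE extends this to $\beta$-stability.

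For the CK-DA statement the cleanest route is to invoke the preceding outcome-equivalence lemma: under IRE and substitutability, CK-DA and AK-DA return the same matching, so stability transfers directly; group stability follows in addition from the earlier lemma identifying $\alpha$-stability with group stability under substitutability. The main obstacle in a direct CK-DA argument would be that $\text{KeptProposals}(a)$ is reset to the most recent chosen set each round rather than accumulating, so the set from which $i$ was rejected at some round need not be a superset of the eventual $\mu(a)$; propagating the rejection forward would then require substitutability's monotone-rejection formulation (Exercise \ref{choice:ex:equivSubs}) rather than the interval property of IRE alone. Routing through outcome equivalence sidesteps this bookkeeping entirely.
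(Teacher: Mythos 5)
The paper states this lemma without proof, so there is no official argument to compare against; judged on its own terms, your proposal is sound and is the standard argument. The key invariant for AK-DA---that the kept pool only accumulates, so the terminal held set at each object $a$ is $C_a$ applied to the set of all agents who ever proposed to $a$---is correct, and both applications of IRE (to get $C_a(\mu(a))=\mu(a)$ and then $i\nin C_a(\mu(a)\cup\set{i})$ for any $i$ who prefers $a$ and hence must have proposed to and been rejected by $a$) are valid instances of the interval property. Deducing the CK-DA claim from the preceding outcome-equivalence lemma is legitimate and explains cleanly why that claim needs substitutability in addition to IRE; your diagnosis of why a direct CK-DA argument is harder (the pool resets, so the set from which $i$ was rejected need not contain the eventual $\mu(a)$) is also accurate. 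The one point to be more careful about is feasibility of the AK-DA outcome: the second sentence of the lemma assumes only IRE, but the earlier feasibility lemma guarantees that AK-DA returns a matching only under substitutability, and IRE alone does not rule out an agent being rejected from a pool and later re-chosen from a strictly larger one, leaving the agent held by two objects. So under IRE alone your argument really establishes that the terminal held sets admit no blocking pair and no individual-stability violation, i.e., the outcome is stable \emph{whenever it is a matching}; this imprecision is inherited from the lemma's statement rather than introduced by you, and you flagged the feasibility assumption, but it is worth stating the caveat explicitly rather than citing a feasibility lemma whose hypothesis is not available in that case.
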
    

If objects' choice rules fail substitutability or IRE, then the outcome of the CK-DA might be an unstable matching; see Exercise \ref{exStable}. Also, if objects' choice rules fail IRE, then the outcome of the AK-DA might be an unstable matching. The following table summarizes conditions required on objects' choice rules to guarantee feasibility and stability of CK-DA and AK-DA algorithms.

\begin{figure}[ht]
	\begin{center}
		\begin{tabular}{| l | c | c | c |}
			\cline{1-4}
			& \emph{Feasible}                   & \emph{Stable}                      &  $=$ \emph{AK-DA}\\
			\cline{1-4}
			\emph{CK-DA}            & None                       & Path Independence           &  Path Independence \\
			\cline{1-4}
			\emph{AK-DA}            & Substitutability           & IRE                         &  None \\
			\cline{1-4}
		\end{tabular}
	\end{center}
	\caption{Choice condition required for various properties of deferred acceptance outcomes}
\end{figure}

\section{Notes}
\label{choice:sec:notes}

\cite{Moulin:1985scw} is a great survey with some of the results we have presented here, and many worthwhile ones we have not.
\cite{ChamEch:2016book} is a lucid treatment of revealed preference theory with a focus on partial observability of choice behavior and questions of falsifiability and testability.

\cite{Echenique:2007geb} introduces the term combinatorial choice and counts the number of substitutable choice functions.
Theorem \ref{thm:WARPequivIRE} and other behavioral implications for combinatorial choice of some classic requirements such as WARP, Sen's $\alpha$, and Plott's path independence \citep{Plott:1973ecta} are studied in \cite{Alva:2018jet}, by way of the behavioral isomorphism between combinatorial and pure choice models.
Theorem \ref{thm:ratCombDemand} is due to \cite{ChamEch:2018mor}.

The characterization of path independent choice functions by lattices of sets appears in various different forms in the literature since at least \cite{JohnDean:2001mss}.
The lattice structure we describe in Section \ref{choice:sec:latticeMaxOptionSets}, and Theorem \ref{choice:thm:closureLattice}, is due to \cite{Koshevoy:1999mss}.
MC rationalizability and Theorem \ref{choice:thm:MCrational} are due to \cite{AizMal:1981AutCont}.
\cite{doganmor} and \cite{kopylov} investigate the minimum size of an MC rationalization.

Theorem \ref{responsive_charac} on responsive rationalizations of combinatorial choice functions is from \cite{chambers2017simple}.
The sequenced priority maximization with rivalry choice rules were first considered by \cite{KomSon:2016te} in a matching with contracts model (see Chapter ``Matching with Transfers" for the definition of this model).
Reserve-based priority maximization choice rules are axiomatically characterized in \cite{EchYen:2015aer}.

There are two topics omitted in the chapter that we encourage the serious student to explore.
First, the space of path independent choice functions on a given ground set has a remarkable lattice structure of its own \citep{DaniKosh:2005mss} and many choice rules defined in the choice function design literature, including the ones examined in Section \ref{choice:sec:choiceRules}, can be understood as operators on the space of choice functions.
Second, there is a body of work falling under discrete convexity theory that has found increasing success in the study of discrete goods settings with money.
One salient result pertains to combinatorial demand: the combinatorial demand function derived from a valuation function satisfies substitutability (as defined for the matching with salaries model of Chapter ``Matching with Transfers") if and only if the valuation function is M$^\natural$-concave, which is an adaptation of concavity for discrete spaces \citep{FujiYang:2003mor}.
\cite{Murota:2016jmid} offers a tremendous survey of discrete convex analysis with extensive references to applications in economics.

\section{Exercises}
\label{choice:sec:exercises}

\begin{exercise}
\label{ex:WARPimplies}
Complete the proof of Theorem \ref{thm:WARPimplies}, by showing that WARP implies transitive rationalizability when the domain $\mc B$ is additive.
\end{exercise}

\begin{exercise}
\label{exIdem}
A function $f$ mapping a space to itself is \emph{idempotent} if $f$ composed with $f$ is equal to $f$ (i.e. $f \circ f = f$). Show that a combinatorial choice function $C$ is idempotent if it satisfies substitutability or IRE.
\end{exercise}

\begin{exercise}
\label{choice:ex:equivSubs}
For a combinatorial choice function $C$ on $\X$, show that each of the following properties is equivalent to substitutability:
	\begin{enumerate}[a.]
		\item \emph{Subadditivity:} $\forall \bS, \bT$, $C(\bS \cup \bT) \subset C(\bS) \cup C(\bT)$
		\item \emph{Monotone rejection:} $\forall \bS, \bT$, if $\bS \subset \bT$, then $\bS \sminus C(\bS) \subset \bT \sminus C(\bT)$
		\item \emph{Antitone non-rejection:} $\forall \bS, \bT$, if $\bS \subset \bT$, then $C(\bS) \cup (\X \sminus \bS) \supseteq C(\bT) \cup (\X \sminus \bT)$
	\end{enumerate}
\end{exercise}

\begin{exercise}
\label{exPIvariants}
For a contraction mapping $f$ on the powerset $\mc S$ of a set $S$, i.e. a mapping $f$ satisfying $\forall S_1 \in \mc S, f(S_1) \subset S_1$, show that each of the following statements is equivalent to path independence:
	\begin{enumerate}[a.]
		\item $f$ satisfies the equation $f(S_1 \cup S_2) = f(f(S_1) \cup S_2)$ 
		\item $f$ is idempotent and \emph{additive-in-the-image}: $f(f(S_1 \cup S_2)) = f(f(S_1) \cup f(S_2))$
	\end{enumerate}
\end{exercise}

\begin{exercise}
\label{choice:ex:MC}
Show that MC rationalizability implies path independence.
\end{exercise}

\begin{exercise}
\label{ex:fillRespecting}
Complete the proof of Theorem \ref{thm:fillRespecting}.
\end{exercise}

\begin{exercise}
\label{ex_responsive}
Complete the necessity part of the proof for Theorem \ref{responsive_charac}, that if a choice function does not satisfy capacity-filling or WARSPrio, then it is not capacity-constrained responsive.
\end{exercise}

\begin{exercise}
\label{ex1}
Construct an example of a combinatorial choice function that satisfies the WARSPrio but violates substitutability.
\end{exercise}

\begin{exercise}
\label{ex:rsuSubsFill}
Adapt the proof of Theorem \ref{thm:resIsSubs} to prove Theorem \ref{thm:rsuSubsFill}.
\end{exercise}

\begin{exercise}\label{exIRE}  \label{exStable}
For each of the following, find an example of a one-to-many matching problem with the provided information:
	\begin{enumerate}[a.]
	\item the CK-DA and AK-DA algorithms produce different outcomes and objects' choice functions satisfy substitutability.
	\item the CK-DA algorithm produces an unstable outcome and objects' choice functions satisfy IRE
	\end{enumerate}
\end{exercise}

\bibliographystyle{cambridgeauthordate}
\bibliography{references}

\end{document}